\title{Automating the Diagram Method\\ to Prove Correctness of Program Transformations}
\author{David Sabel
\institute{Goethe-University\\Frankfurt am Main, Germany}
\email{sabel@ki.informatik.uni-frankfurt.de}
\thanks{This research is supported by the Deutsche Forschungsgemeinschaft (DFG) under grant SA2908/3-1}
}
\newcommand{\commuting}[4]{
\xymatrix{
\cdot\ar[d]_{#1}   & \cdot\ar[l]_{#2}\ar@{-->}[d]^{#4} 
\\
\cdot   & \cdot\ar@{-->}[l]^{#3}
 }
}
\newcommand{\commutingtr}[4]{
\xymatrix@R=4mm@C=4mm{
\cdot\ar[dd]_{#1}  && \cdot  \ar@{-->}[dl]^{#4}\ar[ll]_{#2}
\\
&\cdot \ar@{-->}[dl]^{#3} 
\\
\cdot && 
 }
}
\newcommand{\commutingtriangle}[3]{
\xymatrix{
\cdot\ar[d]_{#1}  & \cdot\ar[l]_{#2}\ar@{-->}[dl]^{#3}
\\
\cdot
 }
}
\newcommand{\letbind}{{=}}
\newcommand{\LNEED}{L_\mathit{need}}
\newcommand{\maycon}{{\downarrow}}
\newcommand{\simlet}{\sim_{\mathit{let}}}
\newcommand{\ignore}[1]{}
\theoremstyle{plain}%
\newtheorem{theorem}{Theorem}[section]
\newtheorem{proposition}[theorem]{Proposition}
\theoremstyle{definition}%
\newtheorem{example}[theorem]{Example}
\newtheorem{definition}[theorem]{Definition}
\newcommand{\ari}{\mathit{ar}}
\newcommand{\calK}{\ensuremath{\mathcal{K}}}
\newcommand{\xrightarrowalpha}[2][]{\mathrel{{\prescript{}{\alpha}\!{\xrightarrow[#1]{#2}}}}}
\newcommand{\xleftarrowalpha}[2][]{\mathrel{{\xleftarrow[#1]{#2}}\!_\alpha}}
\newcommand{\tletr}{{\tt letrec}}
\newcommand{\tletrec}{{\tt letrec}}
\newcommand{\tin}{{\tt in}}
\newcommand{\tnil}{{\tt Nil}}
\newcommand{\tcase}{{\tt case}}
\newcommand{\tof}{{\tt of}}
\newcommand{\tcons}{{\tt Cons}}
\newcommand{\tseq}{{\tt seq}}
\newcommand{\iEnv}{{\mathit{Env}}}
\newcommand{\LR}{\ensuremath{\mathrm{LR}}}
\newcommand{\TRANS}{\mathsf{T}}
\newcommand{\FV}{{\mathit{FV}}}
\newcommand{\BV}{{\mathit{BV}}}
\newcommand{\Var}{{\mathit{Var}}}
\newcommand{\LV}{{\mathit{LV}}}
\newcommand{\wrt}{{w.r.t.}}
\newcommand{\ie}{{{i.e.}}}
\newcommand{\eg}{{{e.g.}}}
\newcommand{\FIGURE}{{{Fig.}}}
\newcommand{\sort}{\mathit{cl}}
\newcommand{\CC}{{{\mathit{Ch}}}}  
\newcommand{\CCK}{\CC} 
\newcommand{\env}{\mathit{env}}
\newcommand{\CV}{\mathit{CV}}
\newcommand{\CVA}{\mathit{CV\!\!}_A}
\newcommand{\Variable}{{{\textbf{\normalfont\bfseries Var}}}}   
\newcommand{\Expression}{\textbf{\normalfont\bfseries Exp}} 
\newcommand{\HExpression}{\textbf{\normalfont\bfseries HExp}} 
\newcommand{\Binding}{\textbf{\normalfont\bfseries Bind}} 
\newcommand{\Environment}{\textbf{\normalfont\bfseries Env}}
\newcommand{\tvarlift}{{\ensuremath{\mathtt{var}}}}  
\newcommand{\LRSX}{{\texttt{\normalfont\tt LRSX}}}
\newcommand{\UV}{{\mathit{MV}}}
\newcommand{\n}[1]{\mathsf{#1}} 
\newcommand{\ANS}{\mathsf{Ans}} 
\newcommand{\CALSR}{\mathsf{SR}}
\newcommand{\SR}{\ensuremath{\mathit{SR}}}
\begin{document}

\maketitle
 
\begin{abstract}
We report on the automation of a technique to prove the correctness of program transformations in higher-order program calculi which may permit recursive let-bindings as they occur in functional programming languages. A program transformation is correct if it preserves the observational semantics of programs. In our LRSX Tool the so-called diagram method is automated by combining unification, matching, and reasoning on alpha-renamings on the higher-order meta-language, and automating induction proofs via an encoding into termination problems of term rewrite systems. We explain the techniques, we illustrate the usage of the tool, and we report on experiments.
\end{abstract}

\section{Introduction}\label{sec:intro}
Program transformations replace program fragments by program fragments. 
They are applied as optimizations in compilers, in code refactoring to increase maintainability of the source code, and in verification for equational reasoning on programs.
In all cases correctness of the transformations is an indispensable requirement.
We focus on program calculi with a small-step operational semantics (in form of a reduction semantics with evaluation contexts, see \eg{}~\cite{wright-felleisen:94}) and a notion of successfully evaluated programs. 
Convergence of programs holds, if the program can be evaluated to a successful program.
As program equivalence we use contextual equivalence \cite{morris:68,plotkin:75}, which holds for program fragments $P_1$ and $P_2$ if interchanging $P_1$ by $P_2$ in any program (\ie~context) is not observable \wrt{}~convergence.
We are particularly interested in extended lambda-calculi with call-by-need evaluation modeling the (untyped) core languages of lazy functional programming languages like Haskell (see~\cite{ariola:95,ariola:97,schmidt-schauss-schuetz-sabel:08}). 

The LRSX Tool\footnote{available from \href{http://goethe.link/LRSXTOOL61}{http://goethe.link/LRSXTOOL61}} supports correctness proofs of program transformations in those calculi by automating the ``diagram method'' (see \eg{}~\cite{schmidt-schauss-schuetz-sabel:08,sabel-schmidt-schauss-MSCS:08} and also \cite{machkasova-turbak:00,wells-plump-kamareddine:03}) which was used in earlier work in non-automated pen-and-paper proofs.
The diagram method is a syntactic approach that can roughly be outlined as follows. 
First all overlaps between standard reduction steps and transformation steps are computed, then the overlaps have to be joined resulting in a complete set of diagrams. 
This step is related to computing and joining critical pairs in term rewrite systems (see \eg{}~\cite{Baader:1998:TR:280474}), however, with two rewrite relations and where for one rewrite relation a strategy (defined by the standard reduction) has to be respected. 
Finally, the diagrams are used in an inductive proof to show correctness of the transformation.

The automation of the method is schematically  depicted in  \FIGURE~\ref{fig:lrsx-tool-structure}. 
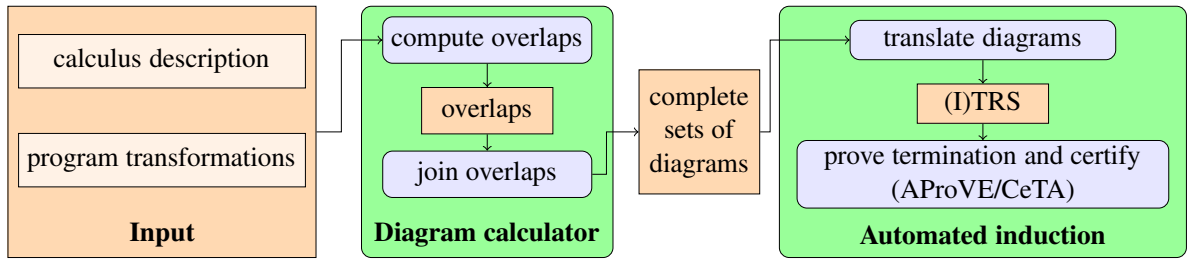
\begin{figure}[t]
\centering
\scalebox{.94}{
\begin{tikzpicture}\selectcolormodel{rgb}
\node[draw,rectangle,rounded corners=0pt,fill=orange!30!white] (inp) at (0,-1) {\rule{0mm}{3.3cm}\rule{4.1cm}{0mm}};
\node[draw,rectangle,rounded corners=0pt,fill=orange!10!white] (inp1b) at (0,0) {\rule{0mm}{5mm}\rule{3.8cm}{0mm}};
\node[draw,rectangle,rounded corners=0pt,fill=orange!10!white] (inp2b) at (0,-1.4) {\rule{0mm}{5mm}\rule{3.8cm}{0mm}};
\node (inp1) at (0,0) {\parbox{4.4cm}{\centering calculus description}};
\node (inp2) at (0,-1.4) {\parbox{4.4cm}{\centering program transformations}};
\node (inp3) at (0,-2.45) {\parbox{4.4cm}{\centering \bf Input}};
\node[draw,rectangle,rounded corners,fill=green!40!white] (diag) at (4.6,-1) {\rule{0mm}{3.3cm}\rule{3.3cm}{0mm}};
\node[draw,rectangle,rounded corners,fill=blue!10!white] (xinp1b) at (4.6,0.3) {\rule{0mm}{0.4cm}\rule{2.7cm}{0mm}};
\node[draw,rectangle,rounded corners=0pt,fill=orange!30!white] (xinp2b) at (4.6,-0.7) {\rule{0mm}{0.4cm}\rule{1.6cm}{0mm}};
\node[draw,rectangle,rounded corners,fill=blue!10!white] (xinp3bx) at (4.6,-1.6) {\rule{0mm}{0.4cm}\rule{2.7cm}{0mm}};
\node (inp1) at (4.6,0.3) {\parbox{3.0cm}{\centering compute overlaps}};
\node (inp1) at (4.6,-0.7) {\parbox{3.0cm}{\centering overlaps}}; 
\node (inp2) at (4.6,-1.6) {\parbox{3.0cm}{\centering join overlaps}};
\draw[->] (xinp1b) to node [] {} (xinp2b);
\draw[->] (xinp2b) to node [] {} (xinp3bx);
\draw[->] (inp.east) -- ([xshift=10pt]inp.east) --   ([xshift=10pt,yshift=37pt]inp.east) -- (xinp1b.west);  
\node (inp3) at (4.6,-2.45) {\parbox{4cm}{\centering \bf Diagram calculator}};
\node[draw,rectangle,rounded corners=0pt,fill=orange!30!white] (dinp) at (7.6,-1) {\rule{0mm}{1.5cm}\rule{1.45cm}{0mm}};
\node (inp3) at (7.6,-1) {\parbox{1.5cm}{\centering complete sets of  diagrams}};
\draw[->] (xinp3bx.east) -- ([xshift=5.7pt]xinp3bx.east) -- ([xshift=-12.9pt]dinp.west) -- (dinp.west);
\node[draw,rectangle,rounded corners,fill=green!40!white] (winp) at (11.61,-1) {\rule{0mm}{3.3cm}\rule{5.5cm}{0mm}};
\node[draw,rectangle,rounded corners,fill=blue!10!white] (winp1b) at (11.61,0.3) {\rule{0mm}{0.3cm}\rule{3.5cm}{0mm}};
\node[draw,rectangle,rounded corners=0pt,fill=orange!30!white] (xinp2b) at (11.61,-0.6) {\rule{0mm}{0.3cm}\rule{1.6cm}{0mm}};
\node[draw,rectangle,rounded corners,fill=blue!10!white] (xinp3b) at (11.61,-1.6) {\rule{0mm}{0.7cm}\rule{5cm}{0mm}};
\node (inp1) at (11.61,0.3) {\parbox{4.0cm}{\centering translate diagrams}};
\node (inp1) at (11.61,-0.6) {\parbox{5.0cm}{\centering (I)TRS}}; 
\node (inp2) at (11.61,-1.6) {\parbox{5.0cm}{\centering prove termination and certify (AProVE/CeTA)}};
\draw[->] (winp1b) to node [] {} (xinp2b);
\draw[->] (xinp2b) to node [] {} (xinp3b);
\node (inp3) at (11.61,-2.45) {\parbox{4cm}{\centering \bf Automated induction}};
\draw[->] (dinp.east) -- ([xshift=4pt]dinp.east)  -- ([xshift=-31.98pt]winp1b.west) -- (winp1b.west);
\end{tikzpicture}
}
\caption{The overall structure of the automated diagram method}\label{fig:lrsx-tool-structure}
\end{figure}
The input consists of a calculus description and a set of program transformations. 
First the diagram calculator computes the overlaps and then tries to join them. 
If a complete set of diagrams is obtained, it is translated into a term rewrite system 
such that termination of the system implies correctness of the program transformations. 
The automated termination prover  AProVE \cite{FBEFFOPSKSST:14} and the certifier CeTA \cite{DBLP:conf/tphol/ThiemannS09} are used to automate these steps.

In previous work, we published results on core algorithms that are used in the tool. 
In \cite{schmidtschauss-sabel-PPDP:2016} the underlying unification-algorithm was defined and analysed, in \cite{sabel-unif:17,sabel:2017-match} a matching algorithm was developed, in \cite{sabel-17:ppdp17} a procedure to alpha-rename meta-expressions was presented, and in the work \cite{rau-sabel-schmidtschauss:12} the encoding of the diagrams as term rewrite systems for automating the induction step was developed. 
However, none of these works presents the full automation of the method. Thus, in this paper, we explain core components of the automated method and illustrate the use of the LRSX Tool. 
In particular, we provide a formal formulation of the rewrite rules (together with some conditions) which ensure that i) the diagram method is correct and ii) the previously developed algorithms are applicable.
We also illustrate how the syntax and rules of a correctness problem are represented in our tool.

{\em Outline.} 
In Sect.~\ref{sec:ex} we illustrate the diagram method for a simple example and thereafter briefly recall the call-by-need lambda calculus $\LNEED$ which will be our running example throughout the paper.
In Sect.~\ref{sec:language} we explain the meta language and the representation of the input of the diagram method. 
In Sect.~\ref{sec:diag} we describe the automated correctness proof for the standard cases.
In Sect.~\ref{sec:ext} we discuss extensions of the automated correctness proof which are also built in the tool. 
Also cases which cannot be handled by the tool are discussed. 
In Sect.~\ref{sec:exp} we report on  some experiments. We conclude in Sect.~\ref{sec:concl}. 

\section{Illustration of the Diagram Method -- Examples}\label{sec:ex}
We illustrate the concept of observational semantics, correctness of program transformations, and the diagram method (and its automation) using a quite simple example. 
In \FIGURE~\ref{fig:simple} we define a program calculus $\mathit{Simple}$.
\begin{figure}
\fbox{\parbox{.95\textwidth}{
\begin{minipage}{.6\textwidth}
\begin{tabular}{@{}l@{}l@{~}c@{~}l@{}}
Expressions         &$e$ &$::=$&$ \bot ~|~ \top ~|~ (\neg e) ~|~ (e \wedge e)$
\\[0.5ex]
Evaluation contexts &$A$ &$::=$&$ [\cdot] ~|~\neg A ~|~ A \wedge e$ 
\\[0.5ex]
General contexts    &$C$ &$::=$& $[\cdot] ~|~ \neg C ~|~ C \wedge e ~|~ e \wedge C$
\\[0.5ex]
Successful programs &$\top$
\\[0.5ex]
Program transformation &\multicolumn{2}{c}{$(top)$} &$C[\top \wedge e] \to C[e]$ 
\end{tabular} 
\end{minipage}\begin{minipage}{.35\textwidth}
Standard reduction $\xrightarrow{sr}$\\[.8ex]
$\begin{array}{@{~~~~~~~~}ll}
(sr,bot)  &A[\bot\wedge e] \xrightarrow{~} A[\bot]\\[.5ex]
(sr,top)  &A[\top\wedge e] \xrightarrow{~} A[e]\\[.5ex]
(sr,neg,1) &A[\neg \top] \xrightarrow{~} A[\bot]\\[.5ex]
(sr,neg,2) &A[\neg \bot] \xrightarrow{~} A[\top]
\end{array}
$
\end{minipage}}
}
\caption{Syntax and Operational Semantics of the Calculus $\mathit{Simple}$}\label{fig:simple}
\end{figure}
The syntax of $\mathit{Simple}$-expressions consists of two constants, $\bot$ to represent a failing computation, and $\top$ to represent success, a unary operator $\neg$ for negation, and a binary operator $\wedge$ which computes the conjunction of $\top$ and $\bot$, \ie~evaluation of $e_1 \wedge e_2$ results in $\top$ iff $e_1$ and $e_2$ both evaluate to $\top$ and otherwise the evaluation ends with $\bot$. 
The reduction strategy which evaluates the $\wedge$-operator from left to right is defined by using evaluation contexts $A$ (defined in \FIGURE~\ref{fig:simple} where $[\cdot]$ denotes the context hole).
The standard reduction $\xrightarrow{sr}$ is the union of the rules $(sr,bot)$, $(sr,top)$, $(sr,neg1)$, and $(sr,neg2)$.

Evaluation contexts $A$ uniquely determine the position where the next standard reduction has to be applied. Hence, thus standard reduction is deterministic.
Let $\xrightarrow{sr,*}$ denote the reflexive-transitive closure of $\xrightarrow{sr}$.
A $\mathit{Simple}$-expression $e$ converges (written $e\maycon$) iff it evaluates to $\top$, \ie~$e\maycon$ iff $e \xrightarrow{sr,*} \top$. 
For instance,
$
((\neg \bot) \wedge \top) \wedge (\neg (\top \wedge \bot))
\xrightarrow{sr}
(\top \wedge \top) \wedge (\neg (\top \wedge \bot))
\xrightarrow{sr}
\top  \wedge (\neg (\top \wedge \bot))
\xrightarrow{sr}
\neg (\top \wedge \bot)
\xrightarrow{sr}
\neg \bot
\xrightarrow{sr}
\top
$.

With $C$ we denote arbitrary contexts. 
Expressions $e_1,e_2$ are contextually equivalent, written $e_1 \sim_c e_2$ iff $\forall C:C[e_1] \maycon \iff C[e_2]\maycon$.
A program transformation $P$ is a binary relation on $\mathit{Simple}$-expressions and it is correct if for all $(e_1,e_2)\in P$, $e_1 \sim_c e_2$ holds.

We consider the correctness proof of transformation $(top)$ which is defined in \FIGURE~\ref{fig:simple}.
Since transformation $(top)$ is already closed by all contexts (\ie~$e_1 \xrightarrow{top}e_2$ implies $C[e_1]\xrightarrow{top} C[e_2]$ for all contexts $C$), 
it suffices to show ``convergence equivalence'' of the transformation, \ie: 
\begin{center}
\text{(1) for all $e_1 \xrightarrow{top} e_2$:  $e_1 \maycon \implies e_2\maycon$ and (2) for all $e_1 \xrightarrow{top} e_2$:  $e_2\maycon \implies  e_1\maycon$.}
\end{center}
For part (1), we have to find all the cases where $e_1\maycon$ and $e_1\xrightarrow{top} e_2$.
A first case distinction is whether (i) $e_1$ is already successful ({i.e.}~$e_1 = \top$) or (ii) $e_1$ is reducible by the standard reduction.
To systematically compute a finite representation of all cases for $e_1$ and $e_2$, 
we use unification and thus unify all left hand sides of rule (top) with $\top$  (for case (i))
and also with all left hand sides of all standard reductions (for case (ii)).
Let us consider one of those unifications: we unify the left hand sides of $(top)$ and  $(sr,bot)$.
The unification problem consists of the equation $C[\top \wedge S_1] \doteq A[\bot \wedge S_2]$,
where $C$ and $A$ are meta-variables for $C$- and $A$-contexts and $S_1,S_2$ are meta-variables for $\mathit{Simple}$-expressions.
It has two most general unifiers:
either $(\top \wedge S_1)$ is a subexpression of $S_2$, 
or
       $(\top \wedge S_1)$ and $(\bot \wedge S_2)$ are at parallel positions.
       
We only illustrate the former case. The unifier is 
 $\sigma = \{S_2 \mapsto C_1[\top \cap S_1],C \mapsto A[\bot \cap C_1]\}$
and the instantiated expression is 
 $\sigma(C[\top \wedge S_1]) = 
 A[\bot \cap C_1[\top \wedge S_1]]
 = \sigma(A[\bot \wedge S_2])$.
 After instantiating the right hand sides of the rules with the unfier, we get
 $\sigma(C[S_1]) = A[\bot \cap C_1[S_1]]$
 and
 $\sigma(A[\bot]) =  A[\bot]$.
 The sequence $A[\bot] \xleftarrow{sr,bot} A[\bot \cap C_1[\top \wedge S_1]] \xrightarrow{top}   A[\bot\cap C_1[S_1]]$
 is called a forking overlap. It has to be joined by applying 
 standard reductions for the right and transformation steps for the left meta-expression to find a common successor of both. 
 If we apply a standard reduction to  $A[\bot \cap C_1[S_1]]$,
 \ie~$A[\bot \cap C_1[S_1]] \xrightarrow{sr} A[\bot]$, we have already found a join.
 Note that this ``application'' of rules is done on meta-expressions which contain
 meta-variables for contexts and expressions and thus it can be done  by matching the expressions against the left hand side of the transformation or standard
reduction, resp. 

\begin{figure}
\begin{minipage}{.35\textwidth}\footnotesize\centering
\newcommand{\forking}[4]{
\xymatrix{
\cdot \ar[r]^{#2}\ar[d]_{#1} & \cdot\ar@{-->}[d]^{#4}
\\
\cdot  \ar@{-->}[r]_{#3} & \cdot
 }
}%
\newcommand{\forkingtriangle}[3]{
\xymatrix{
\cdot \ar[r]^{#2}\ar[d]_{#1} & \cdot\ar@{-->}[dl]^{#3}
\\
\cdot 
 }
}%
\newcommand{\forkingeq}[2]{
 \xymatrix{
\cdot \ar[r]^{#2}\ar@/_10pt/[r]_{#1} &\cdot
}
}%
\noindent$
\!\!\!\!\forkingtriangle{sr,bot}{top}{sr,bot}
\!\!\!\!\!\!\begin{array}[t]{@{}c@{}}
\forking{sr,a}{top}{top}{sr,a}\\
\text{\footnotesize $a \in \{bot,top,neg\}$}
\end{array}
\!\!\!\!\!\!\!\!\forkingeq{sr,top}{top}
$
\caption{Forking diagrams for (top)}\label{fig:top-fork}
\end{minipage}
\begin{minipage}{.6\textwidth}
\centering
$
\begin{array}{c}
\commutingtriangle{sr,bot}{top}{sr,bot}
\!\!\!\begin{array}[t]{@{}c@{}}
\commutingtr{sr,a}{top}{sr,a}{sr,top}\\[-3.5pt]
\text{\footnotesize $a \in \{bot,top,neg\}$}
\end{array}
\begin{array}[t]{@{}c@{}}
\commuting{sr,a}{top}{top}{sr,a}\\
\text{\footnotesize $a \in \{bot,top,neg\}$}
\end{array}
\!\!\commuting{answer}{top}{answer}{sr,top}
\end{array}$
\caption{Commuting diagrams for (top)}\label{fig:top-comm}
\end{minipage}
\end{figure}
The fork together with its join is called a forking diagram.
Usually, forking diagrams are represented abstractly by removing the concrete expressions.
Computing all unifiers and joins leads to the set of (abstract)  diagrams shown in \FIGURE~\ref{fig:top-fork}.
These diagrams can be used in an inductive proof to show that if  $e_1 \xrightarrow{top} e_2$ then $e_1 \maycon \implies e_2\maycon$.
We use induction on the length of the reduction sequence from $e_1$ to $\top$.
If $e_1$ is successful, then the claim holds.
For the induction step, let $e_1 \xrightarrow{sr} e_1'$ such that $e_1'\maycon$. 
Applying a diagram to the fork
 $e_1' \xleftarrow{sr} e_1 \xrightarrow{top} e_2$ either
shows that $e_1' = e_2$, or that 
there  exists $e_2'$ with $e_2 \xrightarrow{sr} e_2'$ and either $e_2' = e_1'$ or  $e_1' \xrightarrow{top} e_2'$.
The induction hypothesis  applied to $e_1'$ shows that $e_2'\maycon$ and thus  $e_2\maycon$.

Part~(2) of the correctness proof of transformation (top) is analogous, 
but we have to overlap (and also unify) the \emph{right} hand side of (top) of against 
the successful result $\top$ and against any left hand side of a standard reduction.
The obtained set of diagrams is shown in \FIGURE~\ref{fig:top-comm}.
The last diagram is for the case that  $e \xrightarrow{top} \top$. Then also $e \xrightarrow{sr,top} \top$.
This is expressed by the diagram where we added the rule $\top \xrightarrow{answer} ans$ for a new constant $ans$ (representing answers, i.e.~successful results).

\begin{figure*}[t]
\fbox{
\begin{minipage}{.97\textwidth}
$\begin{array}{@{}l@{}r@{\!\!}}
\multicolumn{2}{@{\!\!}l@{}}{\text{\em Expressions $e$ and environments $\iEnv$ where $v,v_i,w,w_i$ are variables,}}\\
e ::= w~|~\lambda w.e ~|~ (e_1~e_2) ~|~\tletr~\iEnv~\tin~e 
\qquad\hfill  \iEnv ::= w_1{=}e_1, \ldots, w_n{=}e_n
\\
\multicolumn{2}{@{\!\!}l@{}}{\text{\em Application contexts $A$ and reduction contexts $R$}}\\
A           ::=[\cdot] ~|~ (A~e) 
\qquad
R           ::= A ~|~ \tletr~\iEnv~\tin~A~|~ \tletr~\{w_i{=}A_i[w_{i+1}]\}_{i=1}^{m-1},w_m{=}A_m,\iEnv~\tin~A_0[w_1]
\\
\multicolumn{2}{@{\!\!}l@{}}{\text{\em Standard reduction} \xrightarrow{sr}}
\\
\multicolumn{2}{@{}l@{}}{
\begin{array}{@{}l@{\,}l@{}l@{}}
\mbox{\scriptsize (sr,lbeta)}&R[((\lambda w.e_1)~e_2)] \to R[\tletr~w{=}e_2~\tin~e_1]
\\                 
\mbox{\scriptsize (sr,lapp)} 
&R[(\tletr~\iEnv~\tin~e_1)~e_2] \to R[\tletr~\iEnv~\tin~(e_1~e_2)]
\\
\mbox{\scriptsize (sr,cp-in)}&
\tletr~\{w_i{=}w_{i+1}\}_{i=1}^{m-1},w_m{=}\lambda w.e, \iEnv~\tin~A_0[w_1]
\\
   &\to  
                     \tletr~\{w_i{=}w_{i+1}\}_{i=1}^{m-1},w_m{=}\lambda w.e, \iEnv~\tin~A_0[\lambda w.e]
         
\\                 
\mbox{\scriptsize (sr,cp-e)}&\begin{array}[t]{@{}l@{}}
\tletr\,\{w_i{=}A_i[w_{i+1}]\}_{i=1}^{m-1}{,}w_m {=}A_m[v_1]{,}\{v_j{=}v_{j+1}\}_{j=1}^{n-1}{,}v_n{=}\lambda w.e{,}\iEnv~\tin\,A[w_1]
\end{array}
\\
&
\to  \begin{array}[t]{@{}l@{}}
     \tletr~\{w_i{=}A_i[w_{i+1}]\}_{i=1}^{m-1}{,}w_m{=}A_m[\lambda w.e]{,}\{v_j{=}v_{j+1}\}_{j=1}^{n-1}{,}v_n{=}\lambda w.e,\iEnv~  \tin\,A[w_1] 
     \end{array}\\
&
 \text{ where $A_m {\not=} [\cdot],m {\geq} 1, n {\geq} 1$}
\\
\mbox{\scriptsize (sr,llet-in)}&
\tletr~\iEnv_1~\tin~\tletr~\iEnv_2~\tin~e
   \to
   \tletr~\iEnv_1,\iEnv_2~\tin~e
\\                 
\mbox{\scriptsize (sr,llet-e)}&
\tletr~\{w_i{=}A_i[w_{i+1}]\}_{i=1}^{m-1},w_m{=}(\tletr~\iEnv_1~\tin~e), \iEnv_2~\tin~A_0[w_1]
  \\
&\to
\tletr~\{w_i{=}A_i[w_{i+1}]\}_{i=1}^{m-1},w_m{=}e,\iEnv_1,\iEnv_2~\tin~A_0[w_1]
\\
\end{array}
}
\\
\multicolumn{2}{@{\!\!}l@{}}{\text{\em Successful programs are $\lambda w.e$ or $\tletr~\iEnv~\tin~\lambda w.e$ called weak head normal forms (WHNFs)}}
\\
\multicolumn{2}{@{\!\!}l@{}}{\text{\em Garbage Collection}}
\\
\multicolumn{2}{@{}l@{}}{
\begin{array}{@{}l@{\,}l@{}l@{}}
\mbox{\scriptsize (gc1)}
& \tletr~w_1{=}e_1,\ldots,w_n{=}e_n,\iEnv~\tin~e \to \tletr~\iEnv~\tin~e,\mbox{ if all }  w_i  \mbox{ do not occur in }  \iEnv,e\!\!
\\
\mbox{\scriptsize (gc2)}     
& \tletr~w_1{=} e_1,\ldots,w_n{=}e_n~\tin~e 
  \to 
  e, 
  \mbox{ if all }  w_i  \mbox{ do not occur in }  e
\end{array}
}
\\

\multicolumn{2}{@{\!\!}l@{}}{\text{\em Copy Transformation}}
\\
\multicolumn{2}{@{}l@{}}{
\begin{array}{@{}l@{\,}l@{}l@{}}
\mbox{\scriptsize(cp-in)} 
& \tletr~w {=} \lambda v.e,\iEnv~\tin~C[w]
  \to 
  \tletr~w {=} \lambda v.e,\iEnv~\tin~C[\lambda v.e]
\\
\mbox{\scriptsize (cp-e)} 
& \tletr~w_1 {=} \lambda v.e{,}  w_2 {=} C[w_1]{,}\iEnv~\tin~e'
  \to
  \tletr~w_1 {=} \lambda v.e{,}  w_2 {=} C[\lambda v.e]{,}\iEnv~\tin~e'
\end{array}
}
\end{array}
$
\end{minipage}
}
\caption{The calculus $L_{\mathit{need}}$}
\label{figure-lneed}
\end{figure*}

By induction and using the diagrams we can show that $e_1 \xrightarrow{top} e_2$
and $e_2\maycon$ also implies $e_1\maycon$.
This completes the diagram-based correctness proof for (top) and the program calculus $\mathit{Simple}$.
As we explain later in Sect.~\ref{sec:diag}, the induction can be automated by interpreting the diagrams as rewrite rules on their sequences of labels 
(where sequences of solid arrows are replaced by the sequences with dashed arrows). In Appendix~\ref{sec:simple:app} we provide the full input for the LRSX Tool that is necessary to describe the calculus $\mathit{Simple}$, the transformation (top), and to perform the automatized correctness proof of (top).

As illustrated before, the diagram computation
can be done by algorithms for unification and matching, where for the $\mathit{Simple}$ calculus we 
require  them for first-order terms extended by meta-variables for contexts.
For such a language, all these parts can be implemented by known algorithms and techniques (by using some occurrence restrictions on the context variables, also efficiently, while the general problem is known to be in PSPACE \cite{Jez14}). 
However, we are interested in languages with more complicated syntactic constructs, \ie~program calculi
with expressions with binders (\ie~higher-order constructs, like lambda-abstraction)
and with recursive bindings (called letrec-expressions). This means, that we have to use an extended meta-language
which, for instance, is capable to represent binders and letrec-expressions.  
That is why we from now on switch to a more complex running example, 
the call-by-need lambda calculus with letrec $\LNEED$ (see \eg~\cite{schmidt-schauss-sabel-machkasova-rta:10} for the calculus $\LNEED$ and
\eg~\cite{ariola:95,ariola:97} for similar calculi).
Its syntax, small-step operational semantics (called standard reduction), and the program transformations (gc1) and (gc2) to perform
garbage collection, and (cp-in) and (cp-e)  to copy abstractions, are shown in 
\FIGURE~\ref{figure-lneed}. 
Syntactically, $\LNEED$ extends the untyped lambda calculus by $\tletrec$-expressions $\tletrec~w_1=e_1,\ldots,w_n=e_n~\tin~e$ where the $\tletrec$-environment $w_1=e_1,\ldots,w_n=e_n$ represents a set of \emph{unordered} bindings 
which have a recursive scope, i.e. the scope of $w_i$ are all expressions $e_1,\ldots,e_n$ as well as the $\tin$-expression $e$.
Standard reduction implements the lazy evaluation strategy with sharing by applying small-step reduction rules at needed positions which are determined by application contexts $A$, reduction contexts $R$, and chains of $\tletr$-bindings
that occur as variable-to-variable bindings and also as chains $\{w_i = A[w_{i+1}]\}_{i=1}^m$. The rule $\mbox{(sr,lbeta)}$ implements $\beta$-reduction with sharing, the rules $\mbox{(sr,lapp)}$, $\mbox{(sr,llet-in)}$, and $\mbox{(sr,llet-e)}$ reorder and join letrec-environments , the rules $\mbox{(sr,cp-in)}$ and $\mbox{(sr,cp-e)}$ copy abstractions into needed positions. Reduction is meant modulo (extended) $\alpha$-renaming, \ie~$\alpha$-equivalent expressions where
$\tletr$-bindings are treated like a set  are not distinguished.

\section{Representation of Program Calculi and Transformations}\label{sec:language}
The input of the diagram technique is a program calculus -- with definitions of contexts, standard reduction rules,  answers representing successfully evaluated programs
-- and a set of program transformations.

\subsection{Meta-Syntax to Represent Expressions}
We represent rules and answers in the meta-language $\LRSX$ (see also \cite{schmidtschauss-sabel-PPDP:2016}).
To cover several program calculi the representation is parametrized over a set $\mathcal{F}$ of (higher-order) function symbols and a finite set $\overline{K}$ of context classes\footnote{In the LRSX Tool the set $\overline{K}$ has to be defined explicitly while the set ${\cal F}$ is extracted from the used symbols in the input.}. 
A context class describes a set of contexts (usually defined by a grammar), like $A$- and $C$-contexts in $\mathit{Simple}$ or in $\LNEED$.
We define the {\em syntax of $\LRSX$-expressions} $\Expression$, 
the syntax of variables of a countably-infinite set of variables $\Variable$, 
the syntax of \emph{higher-order expressions of order $n$} $\HExpression^n$ (\ie~syntactic constructs that bind / abstract over $n$ variables, in particular  $\HExpression^0 = \Expression$), and the syntax of \emph{environments} $\Environment$ and \emph{bindings} $\Binding$.
We we assume that every $f \in {\cal F}$ has a \emph{syntactic} type of the form
$f: \tau_1 \to \ldots \to \tau_{\ari(f)} \to \Expression$, 
where $\tau_i$ may be $\Variable$ or $\HExpression^{k_i}$, \ie~the syntactic type of $f$ 
defines the arity of $f$, but also the syntactic category of which each argument has to be part of.
If not otherwise stated, we always assume $\{\tvarlift,\lambda\}\subseteq {\cal F}$ 
where function symbol $\tvarlift$ of type $\Variable \to \Expression$ lifts variables to expressions, 
and $\lambda$ has type $\HExpression^1 \to \Expression$.
To distinguish  term variables, meta-variables, and meta-symbols, we use different fonts and lower- or upper-case letters: concrete term-variables of type $\Variable$ are denoted by $\n{x}$, $\n{y}$,  
and $x,y$ are used as meta-symbols to denote a concrete term variable or a meta-variable.
Similarly, $s,t$ denote expressions, $\env$ denotes environments, and $b$ denotes bindings.
Meta-variables are written in upper-case letters, where 
$X,Y$ are of type $\Variable$, $S$ is of type \Expression,  
$E$ is of type {\Environment},  
$D$ is a context variable, and $\CC$ is a two-hole environment-context variable (chain variable, for short).
Each context variable $D$ has a class $\sort(D)$ and each $\CC$-variable has a class $\sort(\CC)$.
The grammars for the different syntactic categories are:
$$\begin{array}{@{}r@{\,}l@{\,}l@{}}
x,y,z \in \Variable       &::=&  X \,|\,\n{x}
\\
\\[-2.5ex]
s,t \in \HExpression^0 &::=&
   S ~|~ D[s]
   ~|~ \tletr~\env~\tin~s\ ~|~ f\,r_1 \ldots r_{\ari(f)}   \mbox{ such that } r_i \in \tau_i \text{ if } f : \tau_1 \to \ldots \to \tau_n \to \Expression
\\
\\[-2.5ex]
s \in \HExpression^{n} &::=& x.s_1 \text{ ~~if $s_1 \in \HExpression^{n-1}$} \text{ and } n \geq 1
\\
\\[-2.5ex]
b \in \Binding          &::=&       x\letbind s    \text{ ~~where $s \in \HExpression^0$}
\quad\hfill\env \in \Environment   
::=
\emptyset \,|\, E;\env\,|\,\CC[x,s];\env\,|\, b; \env
\end{array}
$$
An $\LRSX$-expression $s$ is {\em ground} (written as $\n{s}$) iff it does not contain any meta-variable,  $\n{d}$ denotes a ground context and $d$ denotes contexts, that may contain meta-variables.
Filling the hole of $d$ with $s$ is written as $d[s]$.
Multi-contexts with $k>1$ holes are written with several hole symbols $[\cdot_1],\ldots,[\cdot_k]$.

\begin{example}
The syntax of the calculus $\mathit{Simple}$ can be represented by instantiating $\mathcal{F}$ = $\{\bot,\top,\neg\}$ where
$\bot,\top : \Expression$,
$\wedge : \Expression \to \Expression \to \Expression$,
$\neg : \Expression \to \Expression$ and using the context classes $\overline{K} := \{A,C\}$ with corresponding descriptions of them (see below). Assuming that $D$ is a context variable of class $A$, the expression $D[S_1 \wedge S_2]$ describes all ground expressions of the form $\n{d}[\n{s}_1\wedge\n{s}_2]$ where $\n{d}$ is a ground-context of context class $A$ and $S_1,S_2$ are arbitrary ground expressions of the calculus $\mathit{Simp}$.
\end{example}

\begin{example}
The syntax of the $\lambda$-calculus (and also of our running example $\LNEED$, since $\tletrec$ is built-in in $\LRSX$) can be expressed in $\LRSX$, by the function symbols $\mathcal{F} = \{\tvarlift,\lambda, {\tt app}\}$ 
where ${\tt app}$ is a binary function symbol 
of type $\Expression \to \Expression \to \Expression$. The application of the identity function to itself can be written as 
the $\LRSX$-expression ${\tt app}~(\lambda (\n{x}.\tvarlift~\n{x}))~(\lambda (\n{x}.\tvarlift~\n{x}))$.
Lists can be represented by function symbols ${\tt nil}::\Expression$  and ${\tt cons}::\Expression \to \Expression \to \Expression$.
A case-expression -- usually written as $\tcase~l~\tof~(\tnil \to e_1)~ (\tcons~\n{x}~\n{xs} \to e_2)$ -- to deconstruct lists 
can be represented as ${\tt caselist}~l~e_1~\n{x}.\n{xs}.e_2$ where ${\tt caselist}$  is a function symbol of type 
$\Expression \to \Expression \to \HExpression^2 \to \Expression$. For the context classes, we may use 
$\overline{K} := \{A,T,C\}$ where $C$ are general contexts, $T$ are top-contexts (which do not have the hole inside an abstraction). Reduction contexts $R$ are not necessary since they can be expressed by $A$-contexts and several variants of the same reduction rule, for the different kinds of $R$-contexts.
\end{example}

In addition to a description of the syntax 
(by a grammar that describes a set of contexts), we require for each 
context class $\mathcal{K}\in\overline{K}$ a {\em prefix} and a {\em forking table}.
These tables are used in the matching and unification algorithms 
to proceed with equations of the form $D_1[s_1] \doteq D_2[s_2]$:
the {\em prefix table} is a partial function that maps pairs of classes ($\mathcal{K}_1$, $\mathcal{K}_2$) to a pair 
of classes ($\mathcal{K}_3$, $\mathcal{K}_4$) such that for context variables $D_i$ with $\sort(D_i) =\mathcal{K}_i$
an equation $D_1[s] \doteq D_2[t]$ where $D_1$ is a prefix of context $D_2$, can be replaced by
the equation $s \doteq D_4[t]$ and the substitution $\{D_1 \mapsto D_3$, $D_2 \mapsto D_3[D_4]\}$.
Undefined cases express that the prefix situation is impossible.
The {\em forking table} is a partial function that maps pairs of classes ($\mathcal{K}_1$, $\mathcal{K}_2$) to a set of tuples of the form
($\mathcal{K}_3$, $\mathcal{K}_4$,$\mathcal{K}_5$,$d[\cdot_1,\cdot_2]$)  such that for context variables $D_i$ of class $\mathcal{K}_i$
an equation $D_1[s] \doteq D_2[t]$ where the paths to the holes of $D_1$ and $D_2$ fork,
the equation can be removed by guessing one tuple in the set and substituting $D_1 \mapsto D_3[d[D_4[\cdot],D_5[t]]], D_2 \mapsto D_3[d[D_4[s],D_5[\cdot]]]$.

\begin{figure}[t]
\begin{minipage}{\textwidth}
\begin{minipage}{.43\textwidth}
\footnotesize%
\begin{verbatim}
define A ::= [.] | (app A S)
define T ::= [.] | (app T S) | (app S T) 
          | letrec X=T;E in S 
          | letrec E in T where E /= {}
declare prefix A A = (A,A) 
declare prefix A T = (A,T)
declare prefix T A = (A,A)
declare prefix T T = (T,T)
\end{verbatim}
\end{minipage}~\begin{minipage}{.55\textwidth}
\footnotesize%
\begin{verbatim}
declare fork A T = (A,A,T,(app [.1] [.2]))
declare fork T T = (T,T,T,(app [.1] [.2]))
declare fork T T = (T,T,T,(app [.2] [.1]))
declare fork T T = (T,T,T,(letrec X=[.1];E in [.2]))
declare fork T T = (T,T,T,(letrec X=[.2];E in [.1]))
declare fork T T = 
               (T,T,T,(letrec X=[.1];Y=[.2];E in S))
declare fork T A = (A,T,A,(app [.2] [.1]))
\end{verbatim}
\end{minipage}
\end{minipage}

\caption{Definition of application and top-contexts as input for the LRSX Tool}
\label{fig:ctxt}
\end{figure}

We do not know whether the prefix and the forking table can be computed from given grammars for the context classes. 
Thus, in the LRSX Tool, the user has to specify them as part of the input.
For calculus $\mathit{Simple}$, the definition of these tables is shown in the Appendix~\ref{sec:simple:app}. For calculus $\LNEED$, we define classes for application contexts \verb!A!, top contexts \verb!T! and arbitrary contexts \verb!C!.
The definition of the former two classes as input for the LRSX Tool is shown in \FIGURE~\ref{fig:ctxt}.
We illustrate some exemplary entries of the prefix and forking table: 
the prefix table maps $(A,T)$  to $(A,T)$, since for every application context $D_1$ that is a prefix of a top-context $D_2$, 
we can substitute  $D_1 \mapsto D_3$ and $D_2 \mapsto D_3[D_4]$ 
where $D_3$ must be an application context (since $D_1$ is one) and $D_4$ must be a top context (since $D_2$ is one).
The prefix table maps $(T,A)$  to $(A,A)$,
since for every top-context $D_1$ that is a prefix of an application context $D_2$, we can substitute $D_1 \mapsto D_3$ and $D_2 \mapsto D_3[D_4]$ where $D_3$ and $D_4$ must be application contexts
to ensure that $D_2$ is an application context.
The forking table for $(A,T)$ has only one entry $(A,A,T,\texttt{app}~[\cdot_1]~[\cdot_2])$,
since an application context $D_1$ and a top context $D_2$ can only have different hole pathes, if there is an application 
where the hole path of $D_1$ goes through the first argument, while the hole path of $D_2$ goes through the second argument,
the expression above this application must belong to application contexts (to ensure that $D_1$ is an application context) 
the context inside the first argument of the application must be an application context (again to ensure that $D_1$ is an application context),
and the context inside the second argument must be a top context (to ensure that $D_2$ is a top context).
For $(T,T)$ there are more entries, since the forking of two top-contexts may happen in an application or in a \tletrec-expression:
there are two cases for the application depending on whether the hole path of the first context goes through the first or the second argument,
and there are three cases for $\tletrec$: the hole path of the first context may go through the $\tin$-expression while the other goes through the $\tletrec$-environment, or vice versa,
or both hole paths go through the environment, but through different bindings. In any case the context above the two parallel holes is a top-context and the contexts below must both be top-contexts.

The semantics of meta-variables is straight-forward except for chain-variables:
$\CC[x,s]$ with $\sort(\CC)=\calK$ stands for $x.\n{d}[s]$ or chains 
   $x.\n{d}_1[(\tvarlift~\n{x}_1)];  \n{x}_1.\n{d}_2[(\tvarlift~\n{x}_2)]$;$\ldots; \n{x}_n.\n{d}_n[s]$ 
with fresh $\n{x}_i$ and contexts $\n{d}, \n{d}_i$ of class $\mathcal{K}$.
For expression $e$, $\UV(e)$ denotes the meta-variables of $e$,  $\FV(e)$ denotes the free variables, $\BV(e)$ 
denotes the bound variables, and $\Var(e):=\FV(e) \cup \BV(e)$. 
For a ground context $\n{d}$, $\CV(\n{d})$ (the \emph{captured variables}) is the set of variables $\n{x}$ which become bound if plugged into the hole of $\n{d}$.
For environment $\env$, $\LV(\env)$ are the let-bound variables in $\env$.
Let $\simlet$ be the reflexive-transitive closure of permuting bindings in a $\tletr$-environment, and
$\sim_{\alpha}$ be
the re\-flex\-ive-transitive closure of combining $\simlet$ and $\alpha$-equivalence.
An $\LRSX$-expression $s$ satisfies the {\em let variable convention (LVC)}
iff a let-bound variable does not occur twice as a binder in the same $\tletr$-environment;
and $s$ satisfies the  {\em distinct variable convention (DVC)} iff $\BV(s)$ and $\FV(s)$ are disjoint
and all binders bind different variables.

\subsection{Rewrite Rules}
The left and the right hand side of a standard reduction rule or a program transformation can be represented
by $\LRSX$-expressions. However, the rules and the transformations come with additional constraints, for instance, for the garbage collection rules, we need to express
that a (part of a) letrec-environment is indeed unused and garbage. 
We thus constrain expressions by so-called constraint tuples:
\begin{definition}
A \emph{constrained expression} $(s,\Delta)$ consists of an $\LRSX$-expression $s$
and a \emph{constraint tuple} $\Delta = (\Delta_1,\Delta_2,\Delta_3)$ such that $\Delta_1$ is a finite set of context variables, called \emph{non-empty context constraints};
 $\Delta_2$ is a finite set of environment variables, called \emph{non-empty environment constraints};
 and
$\Delta_3$ is a finite set of pairs $(t,d)$ where $t$ is an $\LRSX$-expression and $d$ is an $\LRSX$-context, called \emph{non-capture constraints} (NCCs).
A ground substitution $\rho$ {\em satisfies} $\Delta$ iff $\rho(D) \not= [\cdot]$ for all $D \in \Delta_1$;
$\rho(E) \not= \emptyset$ for all $E \in \Delta_2$; and  $\Var(\rho(t)) \cap \CV(\rho(d)) = \emptyset$ for all $(t,d) \in \Delta_3$.
The  {\em concretizations} of $(s,\Delta)$ are
$\gamma(s,\Delta) := \{\rho(s) \mid \rho$ is a ground substitution,  $\rho(s)$  fulfills the LVC,  $\rho$ satisfies $\Delta\}$\footnote{%
In the LRSX Tool constrained expressions are written as
``$e$ {\tt where} $\mathit{Constraints}$''
such that $\mathit{Constraints}$ is a list of constraints, where 
non-empty context constraints are written as $D$ {\tt /= [.]},
non-empty environment constraints are written as $E$ {\tt /= \{\}}, 
and non-capture constraints can occur as $(s,d)$, but also as $[\env,d]$ representing the NCC $(\tletrec~\env~\tin~c,d)$ for some constant $c$.}.
\end{definition}

\begin{example}
The constrained expression $(\lambda X.S,(\emptyset,\emptyset,\{(S,\lambda X.[\cdot])\}))$ 
represents all abstractions that do not use their argument, since the NCC $(S,\lambda X.[\cdot])$ ensures
that (\wrt~instances) the variable $X$ does not occur free in $S$.
The constrained expression $(\tletrec~E~\tin~S,(\emptyset,\{E\},\{(S,\tletrec~E~\tin~[\cdot])\}))$ represents all
$\tletrec$-expressions with a non-empty environment which is garbage:  the NCC forbids references  from $S$ into the environment $E$. The constrained expression $(C[\tvarlift~X],(\{C\},\emptyset,\{(\tvarlift~X,C)\}))$ represents ground expressions of the form $d[\tvarlift~\n{x}]$ where $d$ is a non-empty context and
the occurrence of $\n{x}$ in the context hole of $d$ is guaranteed to be a free occurrence.
\end{example}

We have introduced the formalisms that are required to define our representation of standard reduction rules and program transformations.
We now introduce the notion of letrec rewrite rules which are rewrite rules on $\LRSX$-expressions, constrained by a constraint tuple,  and which have restrictions on the occurrences
of meta-variables. The restrictions make the corresponding unification and matching problems easier to solve. Usually, the rules of a program calculus fulfill these restrictions.

\begin{definition}
For $\ell,r\in\Expression$, a constraint tuple $\Delta$, $\kappa \in \{\SR,\TRANS\}$, a name $n$,
we say that $\ell \xrightarrow{\kappa,n}_\Delta r$ is a {\em letrec rewrite rule}, 
if the following conditions hold:
(i) $\UV(\Delta) \subseteq \UV(\ell) \cup \UV(r)$;
(ii) in each of the expressions $\ell$ and $r$, every variable of type $S$ occurs at most twice, and every variable of kind $E, \CC$, $D$  occurs at most once;
        and $\CCK$-variables occurring in $\ell$ must occur in one $\tletr$-environment only;
(iii) for any ground substitution $\rho$ that satisfies $\Delta$, $\rho(\ell)$ fulfills the LVC iff $\rho(r)$ fulfills the LVC.        
A letrec rewrite rule represents the set of ground rewrite rules
$$
\gamma(\ell\xrightarrow{\kappa,n}_\Delta r) :=
 \left\{     
  \rho(\ell) \to \rho(r) 
       \mid \rho  \text{ is ground  for $\ell,r$,  }
     \\
    \text{the LVC holds for $\rho(\ell), \rho(r)$,  $\rho$ satisfies $\Delta$}
    \right\}.$$
For a set  $\{\ell\xrightarrow{\kappa,n_i}_\Delta r \mid i = 1,\ldots,m\}$ of letrec rewrite rules, 
we write $\n{s} \xrightarrow{\kappa,n_i} \n{t}$
if $(\n{s} \to \n{t}) \in \gamma(\ell\xrightarrow{\kappa,n_i}_\Delta r)$ and $\n{s} \xrightarrow{\kappa} \n{t}$ 
if $\n{s} \xrightarrow{\kappa,n_i} \n{t}$ for some $1 \leq i \leq m $.
We write  $\n{s} \xrightarrowalpha{\kappa,n_i} \n{s}'$ if there exists $\n{s}''$ such that $\n{s} \sim_\alpha \n{s}'' \xrightarrow{\kappa,n_i} \n{s}'$.
\end{definition}

Standard reductions are letrec rewrite rules that are always applicable to expressions which fulfill the DVC.
Answers are constrained expressions which represent successful programs:
 \begin{definition}
 \label{def:sr}
A {\em standard reduction} is a letrec rewrite  rule $\ell\xrightarrow{\kappa,n}_\Delta r$ such that the following condition holds:
if for ground expressions $\n{s}_1,\n{s}_2$ with  $\n{s}_1 \xrightarrow{\SR,n} \n{s}_2 \in \gamma(\ell\xrightarrow{\!\kappa,n\!}_\Delta r)$,
then for all ground expressions $\n{t}_1$, such that $\n{s}_1 \sim_{\alpha} \n{t}_1$ and $\n{t}_1$ fulfills the DVC, 
there exists $\n{t}_2 \sim_\alpha \n{s}_2$, such that $\n{t}_1 \xrightarrow{\!\SR,n\!} \n{t}_2 \in \gamma(\ell\xrightarrow{\!\kappa,n\!}_\Delta r)$.
An {\em answer set $\ANS$} is a finite set of constrained expressions $(t,\Delta)$ such that if $\n{s}\in\gamma(t,\Delta)$,
then for all $\n{s'}\sim_\alpha \n{s}$ such that $\n{s}'$ fulfills the DVC we have
$\n{s}'\in\gamma(t,\Delta)$. If $\n{s}\in\gamma(t,\Delta)$ for some $(t,\Delta)\in\ANS$ 
and $\n{s}' \sim_\alpha \n{s}$, then $\n{s}'$ is called an {\em answer}.
A {\em program calculus} is a pair $({\CALSR},\ANS)$ of a finite set of 
standard reductions ${\CALSR}$
and an answer set $\ANS$, such that whenever $\n{s} \xrightarrow{\SR,n} \n{s}'$ and $\n{s}$ is an answer, then also $\n{s}'$ is answer.
\end{definition}

\begin{example}
 The calculus $\mathit{Simp}$ is a program calculus by instantiation the set $\CALSR$ with the standard reductions $(sr,bot), (sr,top), (sr,neg1), (sr,neg2)$ and the answer set $\ANS$ by $\{(\top,(\emptyset,\emptyset,\emptyset))\}$. Also the calculus $\LNEED$ is
  a program calculus where $\ANS := \{(\lambda X.S,(\emptyset,\emptyset,\emptyset)), (\tletrec~E~\tin~S,(\emptyset,\{E\},\emptyset))\}$ and  $\CALSR$ are all standard reduction rules (partly shown in \FIGURE~\ref{fig:inp1}).

\end{example}

In the LRSX Tool,  standard reduction $\ell \xrightarrow{\SR,n}_\Delta r$  is written
``\verb+{SR+\verb!,!$n$\verb!,!$k$\verb+}+ $\ell$ \verb!==>! $r$ \verb!where! $\mathit{Constraints}$'' 
such that $k$ is a number (the variant of the rule\footnote{In short representation of rule names, the LRSX Tool unions all variants of a rule of the same name.}) and $\mathit{Constraints}$ are the constraints in $\Delta$ written as in constrained expressions.
Answers are defined in the LRSX Tool by ``\verb!ANSWER! $e$ \verb!where! $\mathit{Constraints}$.''

\begin{figure}[tp]\begin{minipage}{\textwidth}
\footnotesize%
\verb?{SR,lbeta,1}   A[app (\X.S1) S2] ==> A[letrec X=S2 in S1] where (S2,\X.[.])?\\
\verb?{SR,lbeta,2}   letrec E in A[app (\X.S1) S2]?\\
\verb?               ==> letrec E in A[letrec X=S2 in S1] where E /= {}, (S2,\X.[.]) ?\\
\verb?{SR,lbeta,3}   letrec E; Ch^A[X1,app (\X.S1) S2] in A1[var X1]?\\
\verb?               ==> letrec E; Ch^A[X1,letrec X=S2 in S1] in A1[var X1] where (S2,\X.[.])?\\
\verb?{SR,lapp,1}    A[app (letrec E in S1) S2] ==> A[letrec E in (app S1 S2)] ?\\
\verb?                where  E /={},(S2,letrec E in [.])?\\
\verb?{SR,lapp,2}    letrec E1 in A[app (letrec E in S1) S2] ?\\
\verb?               ==> letrec E1 in A[letrec E in (app S1 S2)]?\\
\verb?                where E1 /= {},E /={},(S2,letrec E in [.])?\\
\verb?{SR,lapp,3}    letrec E1;Ch^A[X,app (letrec E in S1) S2] in A1[var X] ?\\
\verb?               ==> letrec E1;Ch^A[X,letrec E in app S1 S2] in A1[var X] ?\\
\verb?                where E/={},(S2,letrec E in [.])?\\
\ldots\\
\verb?ANSWER \X.S?\\
\verb?ANSWER letrec E in \X.S where E /= {}?
\end{minipage}
\caption{Some standard reductions and answers for $\LNEED$ as input for the LRSX Tool
\label{fig:inp1}}
\end{figure}

For the calculus $L_{\mathit{need}}$, the conditions on standard reductions hold. An excerpt of the description of $\LNEED$ as input of the LRSX Tool is in 
\FIGURE~\ref{fig:inp1},
where rules (sr,lbeta) and (sr,lapp) are expressed by three rules each, since the reduction contexts $R$ are unfolded into three cases: the reduction context is an $A$-context, the reduction context has the hole in the $\tin$-part of the $\tletrec$, or the hole is in the $\tletrec$-environment. Chain-variables are written as 
\verb+Ch^K+ where \verb!K! is the context class of the chain.
Side conditions of the rules (see \FIGURE~\ref{figure-lneed}) are expressed  by constraints.
The last two lines define the answers in $\LNEED$, which are the weak head normal forms, \ie{}~abstractions perhaps with an outer $\tletrec$.

\begin{definition}
For a program calculus $({\CALSR},\ANS)$, a ground expression $\n{s}_0$  \emph{converges} (written $\n{s}_0\maycon$)
iff there exists a sequence
$\n{s}_{0}
       \xrightarrowalpha{\SR} \n{s}_{1}
           \xrightarrowalpha{\SR} \cdots \xrightarrowalpha{\SR} \n{s}_{k}$
where $\n{s}_{k}$ is an answer and $k\geq 0$.
We write $\n{s} \leq_\maycon \n{t}$ iff $\n{s}\maycon \implies \n{t}\maycon$ ($\leq_\maycon$ is called {\em convergence approximation}),
and $\n{s} \sim_{\maycon} \n{t}$ iff $\n{s} \leq_\maycon \n{t}$ and $\n{t} \leq_\maycon \n{s}$ 
($\sim_\maycon$ is called {\em convergence equivalence}).
If for all contexts $\n{d}$ we have $\n{d}[\n{s}] \leq_\maycon \n{d}[\n{t}]$, then 
we write $\n{s} \leq_c \n{t}$ and say that $\n{t}$ contextually approximates $\n{s}$. Expressions
$\n{s}, \n{t}$ are contextually equivalent ($\n{s} \sim_c \n{t}$) 
if $\n{s} \leq_c \n{t}$ and $\n{t} \leq_c \n{s}$.
\end{definition}

Meta transformations are letrec rewrite rules that fulfill some form of stability \wrt{}~$\alpha$-renaming.
These conditions on meta transformations allow us 
to inspect overlaps between transformations and standard reductions 
or answers {\em without} considering $\alpha$-renaming steps. 
I.e., they guarantee that inspecting overlaps of the form $s_1 \xleftarrow{\SR} s_2 \xrightarrow{T} s_3$ is sufficient,
and hence inspecting overlaps of the form $s_1 \xleftarrowalpha{\SR} s_2 \xrightarrowalpha{T} s_3$, 
where the $\alpha$-renaming part of $s_1 \xleftarrowalpha{\SR} s_2$ is non-trivial,  is not necessary (see Appendix~\ref{sect:snd} for a soundness proof of the diagram technique which also
formalizes this aspect).

\begin{definition}\label{def:meta-trans}
A letrec rewrite rule with $\kappa = \TRANS$ is a {\em meta transformation},
if the following conditions hold
 (see also \FIGURE~\ref{fig:transfocond}):
 \begin{figure}[t]
 \begin{minipage}{.52\textwidth}
 \centering $
 \begin{array}{@{}c@{}}
 \begin{array}[t]{@{}c@{}}
 \xymatrix@R=0mm{
   \n{s}_1 \ar[r]^{\TRANS,n}\ar@{-}[d]_{\sim_\alpha}\ar@{..}[dr]^{\sim_\alpha} & \n{s}_2\ar@{..}[dr]^{\sim_\alpha}
 \\
 {\begin{array}[t]{@{}c@{}}
 \text{\fboxsep1pt\fcolorbox{black!40!white}{black!40!white}{$\n{t}_1$}}\\
 \hspace*{-2cm}\txt{\rotatebox{-90}{\scalebox{.8}{$\in\gamma(t,\Delta)$}}}\hspace*{-2cm}
 \end{array}} 
 &  
 {\begin{array}[t]{@{}c@{}}
 \n{s}_1'\\
 \hspace*{-2cm}\txt{\rotatebox{-90}{\scalebox{.8}{$\in\gamma(t,\Delta)$}}}\hspace*{-2cm}
 \end{array}}
 \ar@{-->}[r]_{\TRANS,n} & \n{s}_2'
 }
 \end{array}
 \quad
 \begin{array}[t]{@{}c@{}}
 \xymatrix@R=5mm{
   \n{s}_1 \ar[r]^{\TRANS,n}\ar@{-}[d]_{\sim_\alpha}\ar@{..}[dr]^{\sim_\alpha} & \n{s}_2\ar@{..}[dr]^{\sim_\alpha}
 \\
 \text{\fboxsep1pt\fcolorbox{black!40!white}{black!40!white}{$\n{t}_1$}}\ar[d]_{\SR,n'}  &  \n{s}_1' \ar@{-->}[r]_{\TRANS,n}\ar@{-->}[d]_{\SR,n'}& \n{s}_2'
 \\
 \n{t}_2\ar@{..}[r]_{\sim_\alpha} & \n{t_2}'
 }
 \end{array}
 \end{array}
 $
 \caption{Illustration of Cond.~\ref{transcond1} and \ref{transcond2} in Def.~\ref{def:meta-trans}: solid lines are given relations, dotted / dashed lines are existentially quantified 
 relations, $\n{t}_1$ fulfills the DVC. \label{fig:transfocond}}
 \end{minipage}~~~%
\begin{minipage}{.46\textwidth}
\centering $\xymatrix@R=1mm@C=12mm{
\cdot \ar[dd]_{\SR,n} \ar[rrr]^{\TRANS,n'} &&& \cdot \ar@{-->}[dd]^{\SR,n_2'}
\\
\\
\cdot \ar@{-->}[dd]_{\SR,n_2}                &                  &         &\ar@{..}[ddddd]
\\
\\
\ar@{..}[d]
\\
\ar@{-->}[dd]_{\SR,n_k}                                   &&&\ar@{-->}[dd]^{\SR,n_l'}
\\
\\
\cdot \ar@{-->}[r]_{\TRANS,n_{k+1}} &  \ar@{..}[r] & \ar@{-->}[r]_{\TRANS,n_m} &\cdot
}
$
\caption{Representation of a forking diagram\label{fig:fork}}
\end{minipage}
 \end{figure}
 for all $\n{s}_1, \n{s}_2, \n{t}_1$ with  
 $\n{s}_1 \xrightarrow{\TRANS,n} \n{s}_2$,
$\n{s}_1 \sim_{\alpha} \n{t}_1$,  such that $\n{t}_1$ fulfills the DVC:
 \begin{inparaenum}
  \item\label{transcond1}
 If $\n{t}_1 \in \gamma(t,\Delta)$ for some $(t,\Delta) \in \ANS$, then there exists
  $\n{s}'_1 \in \gamma(t,\Delta)$ such that $\n{s}'_1 \sim_\alpha \n{s}_1$ 
  and $\n{s}'_1 \xrightarrow{\TRANS,n} \n{s}_2'$ with 
  $\n{s}_2' \sim_\alpha \n{s}_2$.
 \item\label{transcond2}
 If $\n{t}_1 \xrightarrow{\SR,n'} \n{t}_2$, then there exist
  $\n{s}_1' \sim_{\alpha} \n{s}_1$, $\n{s}_2' \sim_\alpha \n{s}_2$, $\n{t}_2' \sim_\alpha \n{t}_2$
  such that $\n{s}_1' \xrightarrow{\TRANS,n} \n{s}_2'$, and $\n{s}_1' \xrightarrow{\SR,n'} \n{t}_2'$.
\end{inparaenum}

A meta transformation $\ell\xrightarrow{\TRANS,n}_\Delta r$ is {\em correct} iff 
$\gamma(\ell\xrightarrow{\TRANS,n}_\Delta r) \subseteq \sim_c$.
A meta transformation $\ell\xrightarrow{\TRANS,n}_\Delta r$ is called {\em overlapable} 
if no $\CC$-variable occurs in $\ell$ and $r$ and the transformation is closed \wrt{}~a sufficient context class for $\sim_c$,
\ie{}~$\n{s} \xrightarrow{\TRANS,n} \n{t}$, $\n{s} \leq_\maycon \n{t}$ imply $\n{s} \leq_c \n{t}$.
\end{definition}
A sufficient criterion for Conditions~\eqref{transcond1} and \eqref{transcond2} from 
Definition~\ref{def:meta-trans} is that applicability of a transformation to an expression $s$ 
implies applicability of the transformation  to all $\alpha$-renamed expressions $s' \sim_\alpha s$ that fulfill the DVC:
\begin{proposition}\label{prop:strong-trans}
Let $({\CALSR},\ANS)$ be a program calculus and $s \xrightarrow{\TRANS,n}_\Delta t$ be a letrec rewrite rule
such that no $\CC$-variable occurs in $\ell$ and $r$ and the transformation is closed \wrt{}~a sufficient context class for contextual equivalence.
Assume that $\n{s}_1 \xrightarrow{\TRANS,n} \n{s}_2$ implies that for all $\n{s}_1' \sim_\alpha \n{s}_1$ such that $\n{s}_1'$ fulfills the DVC 
also $\n{s}_1' \xrightarrow{\TRANS,n} \n{s}_2'$  for some $\n{s}_2' \sim_\alpha \n{s}_2$.
Assume also that $\n{s}_1 \xrightarrow{\TRANS,n} \n{s}_2$ for $\n{s}_1\in\gamma(\ANS)$ implies that for all $\n{s}_1' \sim_\alpha \n{s}_1$ 
also $\n{s}_1' \xrightarrow{\TRANS,n} \n{s}_2'$  holds for some $\n{s}_2' \sim_\alpha \n{s}_2$.
Then $s \xrightarrow{\TRANS,n}_\Delta t$ is overlapable.
\end{proposition}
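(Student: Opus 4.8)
The plan is to unfold the definition of \emph{overlapable} and to notice that it splits into three requirements: (a) the rule is a meta transformation, i.e.\ it satisfies Conditions~\eqref{transcond1} and~\eqref{transcond2} of Definition~\ref{def:meta-trans}; (b) no $\CC$-variable occurs in $\ell$ and $r$; and (c) the transformation is closed \wrt{}~a sufficient context class for $\sim_c$. Requirements~(b) and~(c) appear verbatim among the hypotheses of the proposition (``sufficient context class for contextual equivalence'' being the same as ``for~$\sim_c$''), so the whole argument reduces to verifying the two $\alpha$-stability conditions that turn the letrec rewrite rule $s\xrightarrow{\TRANS,n}_\Delta t$ into a meta transformation. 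Throughout I would fix ground expressions $\n{s}_1,\n{s}_2,\n{t}_1$ with $\n{s}_1\xrightarrow{\TRANS,n}\n{s}_2$, $\n{s}_1\sim_\alpha\n{t}_1$, and $\n{t}_1$ fulfilling the DVC, which is exactly the common premise of both conditions.

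For Condition~\eqref{transcond2} I would use the witness $\n{s}_1':=\n{t}_1$. Since $\n{t}_1\sim_\alpha\n{s}_1$, $\n{t}_1$ fulfills the DVC, and $\n{s}_1\xrightarrow{\TRANS,n}\n{s}_2$, the first stability assumption supplies some $\n{s}_2'\sim_\alpha\n{s}_2$ with $\n{t}_1\xrightarrow{\TRANS,n}\n{s}_2'$. Combining this with the given standard reduction $\n{t}_1\xrightarrow{\SR,n'}\n{t}_2$ (so that $\n{t}_2':=\n{t}_2$) exhibits $\n{s}_1'=\n{t}_1$ as a common source that both transforms to $\n{s}_2'$ and reduces to $\n{t}_2'$, which is precisely what the condition asks for.

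For Condition~\eqref{transcond1} I am in the case $\n{t}_1\in\gamma(t,\Delta)$ for some $(t,\Delta)\in\ANS$, and I would again take $\n{s}_1':=\n{t}_1$. This witness already meets $\n{s}_1'\in\gamma(t,\Delta)$ and $\n{s}_1'\sim_\alpha\n{s}_1$ by the premises, so only transformability of $\n{t}_1$ remains to be shown. As above, the first stability assumption applies to $\n{t}_1$ (it is a DVC-fulfilling $\alpha$-variant of the redex $\n{s}_1$) and yields $\n{s}_2'\sim_\alpha\n{s}_2$ with $\n{t}_1\xrightarrow{\TRANS,n}\n{s}_2'$. Hence all three requirements of Condition~\eqref{transcond1} hold, the rule is a meta transformation, and together with~(b) and~(c) it is overlapable.

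The delicate point is to keep track of \emph{which} stability assumption is really used, and this is governed by the closure properties of the answer set. By Definition~\ref{def:sr} the set $\gamma(t,\Delta)$ is closed only under DVC-fulfilling $\alpha$-variants, and its members are only required to satisfy the LVC rather than the full DVC; consequently one cannot in general transport the answer property from $\n{t}_1$ to the original $\n{s}_1$, i.e.\ $\n{s}_1\in\gamma(\ANS)$ need not hold. This is exactly where the second, answer-specific stability assumption earns its place: it drops the DVC requirement and would be indispensable if the diagram forced an answer-witness $\n{s}_1'$ that is a non-DVC element of $\gamma(t,\Delta)$, because then the first assumption could not be invoked. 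The crux of the proof is therefore to observe that one is always free to route both conditions through the single DVC-representative $\n{t}_1$, which is simultaneously a legal member of $\gamma(t,\Delta)$ and a legal target of the first assumption, so that the potential gap closed by the second assumption never actually has to be crossed; checking that this witness choice is legitimate against Definition~\ref{def:sr} is the one step that demands genuine care.
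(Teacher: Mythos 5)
Your proof is correct, and since the paper states Proposition~\ref{prop:strong-trans} without giving any proof at all, there is nothing to diverge from: your argument is the natural direct verification. The decomposition of ``overlapable'' into (a) the two conditions of Definition~\ref{def:meta-trans}, (b) the absence of $\CC$-variables, and (c) closure under a sufficient context class is right, (b) and (c) are indeed hypotheses of the proposition, and choosing the DVC-representative $\n{t}_1$ itself as the witness $\n{s}_1'$ in both Conditions~\eqref{transcond1} and~\eqref{transcond2} discharges everything via the first stability assumption alone.

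One remark on your closing paragraph, which is the most interesting part of the submission: you are right that, with the definitions exactly as stated, the second (answer-specific) hypothesis is never invoked --- Condition~\eqref{transcond1} only asks for \emph{some} witness in $\gamma(t,\Delta)$ that is $\alpha$-equivalent to $\n{s}_1$ and transforms, and $\n{t}_1$ always qualifies because the answer-set closure property of Definition~\ref{def:sr} guarantees nothing beyond DVC-fulfilling variants anyway. So either you have found a genuine weakening of the proposition's hypotheses, or the author included the second assumption defensively for situations (e.g.\ the use of answer overlaps in the soundness proof of the appendix, where the answer witness produced by Condition~\eqref{transcond1} need not fulfill the DVC) that do not actually enter into verifying the two conditions. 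It would strengthen your write-up to state this explicitly as ``the second hypothesis is not used in this proof'' rather than hedging with ``the potential gap \ldots never actually has to be crossed''; as written, a reader might think you believe the hypothesis is doing silent work somewhere. Apart from that presentational point, the argument is complete.
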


In $L_{\mathit{need}}$,  the criterion  holds for most of the  considered transformations.
An exception is the reversed copy transformation,  (\eg{}~the reversal of $\xrightarrow{\text{cp-in}}$ in 
\FIGURE~\ref{figure-lneed}). It violates the criterion in Proposition~\ref{prop:strong-trans}, 
since all ground instances of the left hand side violate the DVC. However, Conditions \eqref{transcond1} and \eqref{transcond2} from Definition~\ref{def:meta-trans} hold, 
since two occurrences of $\lambda v.e$  do not forbid the application of a standard reduction.

Meta transformations $\ell \xrightarrow{\TRANS,n}_\Delta r$ 
are written in the LRSX Tool as
``\verb+{+$n$\verb!,!$k$\verb+}+ $\ell$ \verb!==>! $r$ \verb!where! $\mathit{Constraints}$'' where $k$ is a non-negative integer representing the variant of the rule.
For the calculus $\LNEED$ a context lemma \cite{schmidt-schauss-sabel-gencontext:10} holds, which shows that top contexts are a sufficient class for $\sim_c$,
thus it suffices to consider the closure of garbage collection \wrt{}~top contexts. 
We can represent the rules for garbage collection as:

\begin{flushleft}\footnotesize%
\verb?{gcT,1} T[letrec E1;E2 in S] ==> T[letrec E1 in S]  ?\\
\verb?         where E1 /= {}, E2 /= {}, [E1,letrec E2 in [.]], (S,letrec E2 in [.])?
\verb?{gcT,2} T[letrec E in S] ==> T[S]  where E /= {}, (S,letrec E in [.]) ?  
\end{flushleft}

\section{Computing Diagrams and Automated Induction}\label{sec:diag}
For proving ${\gamma(gcT)} \subseteq {\leq_\maycon}$, we have to compute all overlaps between the left hand side of (gcT) and an answer
(called \emph{answer overlaps}\footnote{Internally, answer overlaps are computed as overlaps with rules 
 $\ell \xrightarrow{answer}  \mathit{ans}$ for
$\ell \in \ANS$ and a new constant $\mathit{ans}$.}),
and between the left hand sides of (gcT) and a standard reduction (called \emph{forking overlaps})\footnote{In the LRSX Tool the commands to overlap the left hand sides with all standard reductions are 
{\tt overlap (gcT,1).l all} and {\tt overlap (gcT,2).l all}.}. 
Clearly, computing the overlaps cannot be done using the concretizations \wrt{}~$\gamma$, but has to be done on the meta-syntax, \ie{}~by unifying the left hand sides of 
the meta-transformation with the left hand sides of the standard reductions and the answers, respecting the constraint tuples corresponding to the rules.
An appropriate unification algorithm for $\LRSX$ was developed in \cite{schmidtschauss-sabel-PPDP:2016} and implemented in the LRSX Tool.
Calling the tool produces 99 (93, resp.) overlaps of (gcT,1) ((gcT,2) resp.) with all standard reductions and answers.
For joining the overlaps 
we have to apply standard reductions and transformation rules to the constrained expressions (again on the meta-syntax)
of the overlaps until a common successor is found.
For an answer $s$ and an answer overlap ${s} \xrightarrow{\TRANS,n'} {t}$, a {\em join} is a sequence  ${t}_k \xleftarrowalpha{\SR,n_k} \cdots \xleftarrowalpha{\SR,n_1} {t}$ where $k\geq 0$ and ${t}_k \in \gamma(\ANS)$.
For a forking overlap ${s}_1 \xleftarrow{\SR,n} {t} \xrightarrow{\TRANS,n'} {t}_{1}$, 
a {\em join} is a sequence 
$${s}_1 \xrightarrowalpha{\!\SR,n_2\!} \cdots \xrightarrowalpha{\!\SR,n_{k}\!} {s}_{k} 
 \xrightarrowalpha{\!\TRANS,n_{k+1}\!}  \cdots \xrightarrowalpha{\!\TRANS,n_m\!} {s}_{m}
\sim_{\alpha}
{t}_{l} \xleftarrowalpha{\!\SR,n_{l}'\!} \cdots \xleftarrowalpha{\!\SR,n_2'\!} {t}_1
$$
where $m,k,l \geq1$ and $k >1$ is only allowed if $({\CALSR},\ANS)$ is deterministic\footnote{For each ground expression $\n{s}$,  
there exists at most one $\n{t}$ such that $\n{s} \xrightarrow{\SR} \n{t} \in \gamma({\CALSR})$.}.
The forking overlap together with a join builds a \emph{forking diagram} which can be depicted as shown in 
\FIGURE~\ref{fig:fork} (where steps from the overlap are written with solid arrows, 
and (existentially quantified) steps of the join are written with dashed arrows). Similarly, for an answer overlap together with its join is called an {\em answer diagram}.

Applying letrec rewrite rules uses a matching algorithm for $\LRSX$ (see \cite{sabel-unif:17}).
A peculiarity of the matching problem is, that constrained expressions of the overlap have to be matched against meta-expressions 
from the rewrite rule which also come with constraint tuples. Thus the algorithm has to guarantee that the given constraints imply
the needed constraints before returning a matcher. Additionally, the 
rewrite mechanism has to guarantee completeness \wrt{}~ground instances, \ie{}~each rewrite step on the meta-level (applying meta rewrite rules to constrained expressions)
must also be possible for all ground instances.
Our tool uses an iterative and depth-bounded depth-first search to bound the number of applied transformations and reductions. 
Since sometimes no join is found, since a possible rewriting requires more knowledge on the (non-)emptiness of environment and context variables,
the LRSX Tool uses backtracking: if no join is found for an overlap, then first a case distinction for context variables in the problem is done (whether they are empty or non-empty)
and then the case distinction is done for environment variables. 
As a further feature, in the LRSX Tool the search space for joins can be limited:
using the \texttt{ignore}-primitive of the tool one can forbid to use 
some transformations at all for the search for joins, and with
the \texttt{restrict}-primitive the number of allowed uses of a  transformation can be bounded.

For checking if a join is found, we have to test equivalence of constrained expressions.
A simple check is testing $\simlet$, but however, also the constraint tuples have to be checked. 
We omit the more complicated check, but in \cite{sabel:2017-match} 
a sound and complete check for proving equivalence of constrained expressions can be found.
A key technique in the check is to split non-capture constraints $(s,d)$ into \emph{atomic} non-capture constraints
which are pairs $(u,v)$ such that~$u,v$ are variables or meta-variables.
The split is done by collecting the variables and meta-variables appearing in $s$ and in $d$.
A ground substitution $\rho$ {\em satisfies} an atomic NCC $(u,v)$
iff $\Var(\rho(u)) \cap \CVA(\rho(v)) =\emptyset$ where $\CVA(\n{x}) = \{\n{x}\}$ for all variables $\n{x}$ and $\CVA(r) = \CV(r)$ for all other constructs $r$.
Since $\rho$ satisfies $(s,d)$ iff it satisfies all split NCCs, the computations for checking equivalence of constraints can be done on the sets of atomic NCCs.

The \verb!join!-command of the LRSX Tool tries to join the found overlaps and to compute forking and answer diagrams:
The diagrams are rewrite rules where the left hand side represents the overlap and the right hand represents the join, where on both sides
the diagrams are abstracted from the concrete expressions (and thus they represent string rewrite systems where the alphabet are names or reductions and transformations and the abstract symbol \verb!<-ANSWER-!).
For our example, the computed forking diagrams and answer diagrams (in textual representation, and condensed form) are
shown in \FIGURE~\ref{fig-gc-a} and a pictorial representation of the forking diagrams is in 
\FIGURE~\ref{fig-gc-b}.
 Here unions of rules are used (which are also supported in the LRSX Tool):
\verb!(SR,lbeta)! is the union of \verb!(SR,lbeta,1)!, \verb!(SR,lbeta,2)!, and \verb!(SR,lbeta,3)!.
\verb!(SR,lapp)!  is the union of \verb!(SR,lapp,1)!, \verb!(SR,lapp,2)!, and \verb!(SR,lapp,3)!,
\verb!(SR,cp)!  is the union of the rules representing (sr,cp-in) and (sr,cp-e),
\verb!(SR,llet)! is the union of the rules representing (sr,llet-in) and (sr,llet-e) (see \FIGURE~\ref{figure-lneed}),
and \verb!(SR,lll)!  is the union of  \verb!(SR,llet)! and  \verb!(SR,lapp)!.
\begin{figure}[t]
\begin{minipage}[b]{\textwidth}\centering
$
\begin{array}{@{}c@{~~}c@{}}
\xymatrix@R=4mm@C=12mm{
\cdot\ar[r]^{gcT}\ar[d]_{\SR,lbeta} & \cdot \ar@{-->}[d]^{\SR,lbeta}
\\
\cdot\ar@{-->}[r]_{gcT} &  \cdot
}
\xymatrix@R=4mm@C=12mm{
\cdot\ar[r]^{gcT}\ar[d]_{\SR,cp} & \cdot \ar@{-->}[d]^{\SR,cp}
\\
\cdot\ar@{-->}[r]_{gcT} &  \cdot
}
\xymatrix@R=4mm@C=12mm{
\cdot\ar[r]^{gcT}\ar[d]_{\SR,lll} & \cdot \ar@{-->}[d]^{\SR,lll}
\\
\cdot\ar@{-->}[r]_{gcT} &  \cdot
}
\xymatrix@R=4mm@C=12mm{
\cdot\ar[r]^{gcT}\ar[d]_{\SR,lll} & \cdot 
\\
\cdot\ar@{-->}[ur]_{gcT} 
}
\end{array}
$
\caption{Diagrams for (gcT), pictorial}\label{fig-gc-b}
\end{minipage}
\\
~~\\
\begin{minipage}[b]{.59\textwidth}
\footnotesize
\verb!<-SR,lbeta- . -gcT-> ~~> -gcT-> . <-SR,lbeta-!\\
\verb!   <-SR,cp- . -gcT-> ~~> -gcT-> . <-SR,cp-!\\
\verb!  <-SR,lll- . -gcT-> ~~> -gcT-> . <-SR,lll-!\\
\verb!  <-SR,lll- . -gcT-> ~~> -gcT->!\\
\verb!  <-ANSWER- . -gcT-> ~~> <-ANSWER-!
\caption{Diagrams for (gcT), textual}\label{fig-gc-a}
\end{minipage}%
\begin{minipage}[b]{.4\textwidth}
\footnotesize
\verb!gcT(SRlbeta(x)) -> SRlbeta(gcT(x))!\\
\verb!gcT(SRcp(x)) -> SRcp(gcT(x))!\\
\verb!gcT(SRlll(x)) -> SRlll(gcT(x))!\\
\verb!gcT(SRlll(x)) -> gcT(x)!\\
\verb!gcT(Answer) -> Answer!
\caption{Obtained TRS for (gcT)}\label{fig-gc-c}
\end{minipage}%
\end{figure}
In a pen-and-paper proof of $\gamma(gcT)\subseteq \leq_\maycon$, an induction on the length of a converging reduction sequence $s \xrightarrow{\SR,*} s'$ for $s$ with $s \xrightarrow{gcT} t$ 
is used to show that $t$ converges. The induction base is covered by the answer diagrams, and for the induction step, let
$s \xrightarrow{\SR} s_1 \xrightarrow{\SR,*}  s'$.
Applying a forking diagram to $s_1 \xleftarrow{\SR} s \xrightarrow{gcT} t$ shows existence of some $t'$ with $s_1 \xrightarrow{gcT} t' \xleftarrow{\SR} t$  or $s_1 \xrightarrow{gcT} t' = t$
and by the induction hypothesis $t' \maycon$ which also implies $t\maycon$.
This induction (even with more complex induction measures) can be automatized by interpreting the  answer and forking diagrams as term rewrite system and by showing (innermost) termination of them
(see \cite{rau-sabel-schmidtschauss:12}). From the obtained answer and forking diagrams for $(gcT)$, the LRSX Tool generates the term rewrite system shown in 
\FIGURE~\ref{fig-gc-c} 
which can be proved to be innermost terminating using the prover AProVE and the certifier CeTA.

\section{Extended Techniques and Limitations of the Method}\label{sec:ext}
Our example  to prove ${\gamma(gcT)} \subseteq {\leq_\maycon}$ is quite simple. Unification and matching 
for \LRSX-expressions and usual term rewrite systems for the automated induction are successful.
However, the LRSX Tool provides more sophisticated techniques that are for instance required when proving the remaining part, \ie{}~${\gamma(gcT)} \subseteq {\geq_\maycon}$,
to complete the  correctness proof of garbage collection. 
First observe that the diagram technique works as before with the difference that the reversal of (gcT) is used
(\ie{}~with writing $(gcT)^-$ for reversing the transformation $(gcT)$ we have to show $\gamma((gcT)^-) \subseteq {\leq_\maycon}$).
However, this means that we have to overlap left hand sides of standard reductions and answers with \emph{right} hand sides of (gcT). 
The obtained overlaps are called answer and  \emph{commuting diagrams}.
Computing the overlaps results in 99 overlaps for (gcT,1) and 203 overlaps for (gcT,2). 
However, using the presented techniques for computing joins fails.
An overlap (we omit the constraints) which cannot be joined is
 $$
 \xymatrix@C=14mm@R=4mm{
                    A[(\lambda X.S)~T[\tletrec~{E_1}~\tin~S']] \ar[d]_{\SR,lbeta,1}
                    &A[(\lambda X.S)~T[\tletrec~{E_1;E_2}~\tin~S']] \ar[l]_(.51){gcT,1} 
                       \\
                  A[\tletrec~{X.T[\tletrec~{E_1}~\tin~S']}~\tin~S]
                  }
                  $$
 The automated method cannot apply a (SR,lbeta)-reduction to the upper-right expression, since it cannot infer that 
 variable $X$ does not occur in $E_2$. 
 However, this problem can be solved by $\alpha$-renaming the expression such that the DVC holds.
 That is why symbolic $\alpha$-renaming (see \cite{sabel-17:ppdp17}) is built into the LRSX Tool which is quite more complex than usual $\alpha$-renaming, since
 it has to be performed on the meta syntax, \eg{}~internally symbolic renamings of the form  $\alpha\cdot S$ are required.
 Even with $\alpha$-renaming, the LRSX Tool cannot join all overlaps. {E.g.}, for the  overlap (we omit the constraints)
 $                           
  \xymatrix@C=12mm@R=4mm{
A[\tletrec~{X.S'}~\tin~S]
                       &A[(\lambda X.S)~S'] \ar[l]_(.4){\SR,lbeta,1}
                       &A[(\tletrec~{E}~\tin~(\lambda X.S))~S'] \ar[l]_(.6){gcT,2}
}$
 a meta-argument is required to close the overlap stating that
the standard reduction moves the environment $E$ to the top of the expression, \ie~a sequence 
$A[(\tletrec~{E}~\tin~(\lambda X.S))~S']
\xrightarrow{\SR,lll,+} \tletrec~{E}~\tin~A[(\lambda X.S))~S']$
where $\xrightarrow{\SR,lll,+}$ is the transitive closure of $\xrightarrow{\SR,lll}$.
In the LRSX Tool such  transitive closures can be defined and
with these rules it is able to compute a complete set of commuting diagrams for the (gcT)-transformation.
A pictorial representation of the commuting diagrams for $a \in \{lbeta,cp,lll\}$ is shown in 
\FIGURE~\ref{gc-comm}.
\begin{figure}[t]
\begin{minipage}[b]{.50\textwidth}
\centering
$
\xymatrix{
\cdot \ar[d]_{\SR,a}  & \cdot \ar[l]_{gcT}\ar@{-->}[d]^{\SR,a}
\\
\cdot                    & \cdot\ar@{-->}[l]^{gcT}
}
\!\!\!\!\!\!\!\!\!\!\!\!
\xymatrix@R=6mm{
\cdot \ar[dd]_{\SR,a}  & \cdot \ar[l]_{gcT}\ar@{-->}[d]^{\SR,lll,+}
\\
                          & \cdot\ar@{-->}[d]^{\SR,a}
\\
\cdot                    & \cdot\ar@{-->}[l]^{gcT}
}
\!\!\!\!\!\!\!\!\!\!\!\!\!\!\!\!\!\!\!\!
\xymatrix@R=4mm{
\cdot \ar[ddd]_{\SR,lbeta}  & \cdot \ar[l]_{gcT}\ar@{-->}[d]^{\SR,lll,+}
\\
                          & \cdot\ar@{-->}[d]^{\SR,lbeta}
\\
                          & \cdot\ar@{-->}[d]^{\SR,lll}
\\
\cdot                    & \cdot\ar@{-->}[l]^{gcT}
}
$
\caption{Commuting Diagrams for (gcT)\label{gc-comm}}
\end{minipage}~~~%
\begin{minipage}[b]{.48\textwidth}
\verb! gcT(SRlbeta(x)) -> W24(k,x)!\\
\verb! W24(s(k),x) -> SRlll(W24(k,x))!\\
\verb! W24(s(k),x) -> SRlll(SRlbeta(gcT(x)))!
\caption{Term rewrite rules for the $2^\text{nd}$ diagram}\label{gc-comm-trs}
\end{minipage}
\end{figure}

The automated induction has to treat the transitive closure in the rules. A naive encoding leads to term rewrite systems with infinitely many rules.
The LRSX Tool generates a term rewrite system with free variables on the right hand sides
(or alternatively integer term rewrite systems, see \cite{rau-sabel-schmidtschauss:12,DBLP:conf/rta/FuhsGPSF09})
where these variables are interpreted as variables representing constructors.  Every transitive closure is encoded as a guessing of the number of steps it represents.
{E.g.}, the  second diagram 
is encoded by three term rewrite rules in \FIGURE~\ref{gc-comm-trs}.
The termination prover AProVE and the certifier CeTA support such termination problems with free variables on right-hand sides interpreted as arbitrary constructor term.
For (gcT), innermost termination can be proved and certified.

Now consider the transformations (cp-in) and (cp-e) from \FIGURE~\ref{figure-lneed} closed by top-contexts.
Computing commuting diagrams  and deriving the corresponding term rewrite system results 
in the system\footnotesize
\begin{verbatim}
       cpT(SRlbeta(x)) -> SRlbeta(cpT(x))        cpT(SRcp(x)) -> SRcp(cpT(cpT(x)))
         cpT(SRlll(x)) -> SRlll(cpT(x))       cpT(SRlbeta(x)) -> SRcp(SRlbeta(x))
          cpT(SRcp(x)) -> SRcp(cpT(x))
\end{verbatim}
\normalsize{}which is non-terminating. If we split (cpT) into transformations where the copy target is a top-context (tcpT)
and where 
the target is below an abstraction (dcpT), then the diagram set becomes
$$
\xymatrix@C=6mm@R=6mm{
\cdot \ar[d]_{\SR,a}                        & \cdot \ar[l]_{tcpT}\ar@{-->}[d]^{\SR,a}
\\
\cdot                    & \cdot\ar@{-->}[l]^{tcpT}
}
\xymatrix@C=6mm@R=6mm{
\cdot \ar[d]_{\SR,a}                        & \cdot \ar[l]_{dcpT}\ar@{-->}[d]^{\SR,a}
\\
\cdot                    & \cdot\ar@{-->}[l]^{dcpT}
}
\xymatrix@C=11mm@R=6mm{
\cdot \ar[d]_{\SR,lbeta}                        & \cdot \ar[l]_{dcpT}\ar@{-->}[d]^{\SR,lbeta}
\\
\cdot                    & \cdot\ar@{-->}[l]^{tcpT}
}
\xymatrix@C=6mm@R=6mm{
\cdot \ar[d]_{\SR,cp}                        && \cdot \ar[ll]_{dcpT}\ar@{-->}[d]^{\SR,cp}
\\
\cdot                    & \cdot\ar@{-->}[l]^{dcpT} & \cdot\ar@{-->}[l]^{dcpT}
}
\xymatrix@R=2mm@C=7mm{
\cdot \ar[dd]_(0.5){\SR,lbeta}                        && \cdot \ar[ll]_{tcpT}\ar@{-->}[dl]^{\SR,cp}
\\
                & \cdot\ar@{-->}[dl]^{\SR,lbeta}
\\
\cdot
}
$$
and termination of the corresponding TRS can be proved.

We conclude this section by explaining situations for program calculi and program transformations
that cannot be handled by the current version of the LRSX Tool. 
The underlying meta language has no support for substitutions, \ie~usual $\beta$-reduction $(\lambda x.s)~t \to s[t/x]$ can only be represented by encoding explicit substitutions (for instance, by using the $\tletrec$-construct). Languages which use an equational theory to equate programs (for instance, structural congruence in the $\pi$-calculus \cite{milner:book} or in the CHF-calculus\cite{sabel-schmidt-schauss-PPDP:2011}) are not supported at the moment, 
since this would require unification and matching to handle the equational theory. 
Also program transformations with more complicated side-conditions (for instance, those using strictness information) can not be represented in the tool, 
since only rules that can be constrained the constraint tuples can be represented.
Finally, the occurrence restrictions on meta-variables and the conditions on program transformations clearly forbid some program transformations. For instance, we do not allow program transformations that use chain-variables, for calculi which use chain-variables in the standard reduction rules. 

\begin{table}[t]
\centering\begin{tabular}{@{\,}l@{\,}||@{\,}c@{\,}|@{\,}c@{\,}||@{\,}c@{\,}|@{\,}c@{\,}||@{\,}c@{\,}|@{\,}c@{\,}||@{\,}c@{\,}|}
\cline{2-8}
 \multicolumn{1}{@{\,}l@{\,}|}{}&\multicolumn{2}{@{\,}l@{\,}||@{\,}}{\parbox{2.99cm}{\footnotesize\centering~\\[-.7ex]\scalebox{.95}{\mbox{\# overlaps}}}}
&\multicolumn{2}{@{\,}l@{\,}||@{\,}}{\parbox{2.99cm}{\footnotesize\centering~\\[-.7ex]\scalebox{.95}{\mbox{\# meta}} \scalebox{.95}{joins}}}
&\multicolumn{2}{@{\,}l@{\,}||@{\,}}{\parbox{2.99cm}{\footnotesize\centering~\\[-.7ex]\scalebox{.95}{\mbox{\# meta joins}}\\[-1ex]\scalebox{.8}{{\scriptsize with $\alpha$-renaming}}}}
&\parbox{4.5cm}{\centering {\footnotesize\centering~\\[-.7ex]\scalebox{.95}{\mbox{diagram computation time}}}}
\\\cline{2-8}
\multicolumn{1}{@{\,}l@{\,}|}{}                                    
& \parbox{15mm}{\centering\scalebox{.7}{forking}}
& \parbox{11mm}{\centering\scalebox{.7}{answer}}
& \parbox{15mm}{\centering\scalebox{.7}{forking}}
& \parbox{11mm}{\centering\scalebox{.7}{answer}}
& \parbox{15mm}{\centering\scalebox{.7}{forking}}
& \parbox{11mm}{\centering\scalebox{.7}{answer}}
& \parbox{11mm}{\centering\scalebox{.7}{ }}
\\\hline\hline                                                                    
\multicolumn{8}{|@{\,}l@{\,}|}{Calculus $\LNEED$ (11 SR rules, 16 transformations, 2 answers)}
\\\hline
\multicolumn{1}{|@{\,}l@{\,}|}{$\to$}             & 2215 &  27          & 5398 & 27               & 93 & 0          & 48 secs.
\\\hline
\multicolumn{1}{|@{\,}l@{\,}|}{$\leftarrow$}    & 2963 & 38           & 7235 & 38               & 1399 & 3          & 116 secs.
\\\hline\hline                                                                    
\multicolumn{8}{|@{\,}l@{\,}|}{Calculus $\LNEED^{+seq}$ (17 SR rules, 18 transformations, 2 answers)}
\\\hline
\multicolumn{1}{|@{\,}l@{\,}|}{$\to$}             & 4869  &  29          & 14700 & 29               & 143 & 0        & 149 secs.
\\\hline
\multicolumn{1}{|@{\,}l@{\,}|}{$\leftarrow$}     & 6394 & 43           & 18046 & 43               & 2374 & 3          & 255 secs.
\\\hline\hline                                                                    
\multicolumn{8}{|@{\,}l@{\,}|}{Calculus $\LR$ (76 SR rules, 43 transformations, 17 answers)}
\\\hline
\multicolumn{1}{|@{\,}l@{\,}|}{$\to$}          & 85455 & 1586         &  389678 & 1586           & 73601 & 0         & $\sim$ 19 hours
\\\hline
\multicolumn{1}{|@{\,}l@{\,}|}{$\leftarrow$}   & 105053 & 2280        & 426664 & 2440           & 93075   & 155       & $\sim$ 16 hours
\\\hline 
\end{tabular}
\\[.5ex]
\caption{Statistics of executing the LRSX Tool}\label{tbl:stat}
\end{table}
\section{Implementation and Experiments}\label{sec:exp}
The Haskell-implementation of the automated diagram method to prove correctness of program transformation
is available as a Cabal-package from \href{http://goethe.link/LRSXTOOL61}{http://goethe.link/LRSXTOOL61}.
We tested our implementation with three different program calculi and a lot of program transformations.
The tested calculi are the calculus $\LNEED$ \cite{schmidt-schauss-sabel-machkasova-rta:10} -- a minimal call-by-need lambda calculus with $\tletrec$ --
the calculus $\LNEED^{+seq}$ which extends $\LNEED$ by the $\tseq$-operator, where $\tseq~e_1~e_2$ first evaluates the first argument $e_1$ and after obtaining a successful result
it evaluates argument $e_2$, and  the calculus $\LR$ \cite{schmidt-schauss-schuetz-sabel:08} which extends $\LNEED^{+seq}$ by data constructors for lists, 
booleans and pairs together with corresponding case-expressions, and can be seen as an untyped core language of Haskell.
The tested program transformations include all calculus reductions which can be summarized as ``partial evaluation'', 
several copying transformations and  rules for removing garbage and inlining of let-bindings which are referenced only once.

Our experimental results are in Table~\ref{tbl:stat}, where we also list the numbers of standard reductions, transformations, and answers in the input.
The table shows the numbers of computed overlaps, corresponding joins (which is higher due to the branching in unsuccessful cases), joins which use the $\alpha$-renaming procedure.
The row marked with $\to$ represent the forking diagrams, and $\leftarrow$ represent the reversed transformations, \ie{}~commuting diagrams.
In all cases, termination of the termination problems was proved by AProVE and certified by CeTA.
The last column lists the execution time\footnote{Tests ran on a system with Intel i7-4790 CPU 3.60GHz, 8 GB memory using GHC's {\tt -N} option for parallel execution} for calculating the overlaps and the joins. 
With increasing numbers of rules, transformations, and syntactic constructs the computation time  increases, due to the combinatorial explosion. 
The time to compute joins for commuting diagrams in $\LR$ is higher than for computing forking diagrams, since we put more effort in optimizing  the commuting diagram computation (by avoiding unusual search paths).

\section{Conclusion}\label{sec:concl}
We presented a system (the LRSX Tool) to automatically prove correctness of program transformations.
 We illustrated its use by an example and discussed peculiarities of its design and
 its implementation. By providing the results of experiments, we demonstrated
the success of the method and the tool.
\\[3.5ex]
{\bf Acknowledgments.}
We thank Ren\'{e} Thiemann for support on AProVE and CeTA. We also thank the anonymous reviewers of WPTE 2018 for their valuable comments.

\bibliographystyle{eptcs}
\bibliography{bibshort}

\appendix
\section{Soundness of the Diagram Method}\label{sect:snd}
We show soundness of the diagram method.
Since we sometimes use slightly more general formulations of program transformations for computing joins  
(but not for computing overlaps), we use two sets of meta transformations.
Let  $({\CALSR},\ANS)$ be a program calculus,
 $\mathsf{OTR}$ be a set of overlapable meta transformations, and $\mathsf{TR} \supseteq \mathsf{OTR}$ be
 a set of meta transformations such that for each $(\ell \xrightarrow{\!\TRANS,n\!}_\Delta r) \in \mathsf{TR}$
 there exists $(\ell \xrightarrow{\!\TRANS,n'\!}_\Delta r) \in \mathsf{OTR}$ 
 with $\gamma(\ell \xrightarrow{\!\TRANS,n\!}_\Delta r) {\subseteq} \gamma(\ell \xrightarrow{\!\TRANS,n'\!}_\Delta r)$
 (we say that \emph{$n'$ subsumes $n$ \wrt{}~$\gamma$}).
A set of forking and answer diagrams is {\em complete} for a set $\mathsf{OTR}$ iff for all forking overlaps of transformations in $\mathsf{OTR}$ 
with standard reductions and every answer overlap, an applicable diagram is in the set.
Applicabilty means that the concrete overlap is an instance of the overlap described by the diagram and that the existentially quantified expressions,
reductions, and transformations can accordingly be instantiated.

A set of forking and answer diagrams can be viewed as a string rewrite system (that replaces the overlap by the join).
In \cite{rau-sabel-schmidtschauss:12} it was shown that proving termination of the string rewrite system with infinitely many rules
can be automated by using  termination provers for term rewrite systems to show termination of the corresponding 
integer term rewrite system, or term rewrite system with free variables on the right hand side that represent arbitrary constructor terms. 
We do not repeat this technqiue here, and formulate our soundness result in terms of the string rewrite system which is induced by the diagrams:
\begin{theorem}\label{theorem:correctness-method}
If a complete set of forking and answer diagrams for $\mathsf{OTR}$ is terminating as a string rewrite system,
then all $\ell \xrightarrow{\!\TRANS,n\!}_\Delta r \in \mathsf{TR}$ are convergence equivalent.
\end{theorem}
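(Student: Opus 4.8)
The plan is to establish the statement by a short chain of reductions followed by one well-founded induction whose well-foundedness is exactly the assumed termination of the string rewrite system. First I would eliminate $\mathsf{TR}$ in favour of $\mathsf{OTR}$: every $n\in\mathsf{TR}$ is subsumed by some $n'\in\mathsf{OTR}$, so $\gamma(\ell \xrightarrow{\TRANS,n}_\Delta r)\subseteq\gamma(\ell \xrightarrow{\TRANS,n'}_\Delta r)$, and hence every ground instance $\n{s}\to\n{t}$ of a $\mathsf{TR}$-rule is a ground instance of an $\mathsf{OTR}$-rule; it therefore suffices to prove $\gamma(\phi)\subseteq{\sim_\maycon}$ for the overlapable rules $\phi\in\mathsf{OTR}$. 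Next I would split $\sim_\maycon$ into its two halves: the forward half $\gamma(\phi)\subseteq{\leq_\maycon}$ is what the forking and answer diagrams of $\phi$ are built for, while the backward half $\gamma(\phi)\subseteq{\geq_\maycon}$ is exactly the forward half of the reversed rule $\phi^-$, whose forking diagrams are the commuting diagrams of $\phi$ (so a complete set yielding convergence equivalence must cover the forking overlaps of both $\phi$ and $\phi^-$). Both halves are thus instances of a single claim, which I prove once: given a complete, terminating set of forking and answer diagrams for a transformation $\psi$, one has $\gamma(\psi)\subseteq{\leq_\maycon}$.

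For this claim, fix a ground step $\n{s}\xrightarrow{\TRANS}\n{t}$ with $\n{s}\maycon$, witnessed by a converging sequence $\n{s}\xrightarrowalpha{\SR}\cdots\xrightarrowalpha{\SR}\n{a}$ with $\n{a}$ an answer. I would encode this obligation as the word built from the $\SR$-labels of the converging sequence with the pending transformation symbol on the outside and the symbol $\mathit{Answer}$ at the base, and read the diagram set as the induced string rewrite system of \cite{rau-sabel-schmidtschauss:12}. Two facts drive the argument: (a) by completeness of the diagram set, any word in which a transformation symbol stands immediately above an $\SR$-symbol or above $\mathit{Answer}$ is the left-hand side of an applicable forking resp.\ answer diagram, hence reducible; and (b) each diagram rewrite replaces a fork (resp.\ answer overlap) by its join, and --- using that the meta-level matching and rewriting are complete with respect to ground instances and respect the constraint tuples --- this lifting is sound, so after the rewrite the word still denotes a genuine family of concrete $\SR$- and $\TRANS$-steps realizing the same convergence obligation, now for $\n{t}$.

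Now I would invoke the hypothesis. Termination of the string rewrite system yields strong normalisation, so the encoded word rewrites to a normal form. Since the converging sequence is finite and ends in $\mathit{Answer}$, any descendant word still containing a transformation symbol has an innermost such symbol standing immediately above an $\SR$-symbol or above $\mathit{Answer}$, which by (a) makes it reducible; consequently a normal form contains no transformation symbol at all --- note that the intermediate diagrams may duplicate transformation symbols, which is precisely why the termination hypothesis, rather than a simple count, is needed. Hence the normal form is a pure $\SR$-word ending in $\mathit{Answer}$, which by (b) denotes a converging reduction $\n{t}\xrightarrowalpha{\SR}\cdots\xrightarrowalpha{\SR}\n{a}'$ to an answer, i.e.\ $\n{t}\maycon$. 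Applying the claim to $\phi$ and to $\phi^-$ gives $\sim_\maycon$, and the subsumption step propagates it to all of $\mathsf{TR}$.

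I expect the main obstacle to lie in step (b), specifically in the $\alpha$-renaming bookkeeping. Concrete reductions are taken modulo $\sim_\alpha$ (the relations $\xrightarrowalpha{\SR}$ and $\xrightarrowalpha{\TRANS}$), whereas the diagrams were computed from overlaps $\n{s}_1\xleftarrow{\SR}\n{s}_2\xrightarrow{\TRANS}\n{s}_3$ without intervening $\alpha$-renaming. Showing that these $\alpha$-free overlaps are nonetheless exhaustive --- so that every concrete fork really is an instance of a computed diagram --- is precisely what Conditions~\eqref{transcond1} and \eqref{transcond2} of Definition~\ref{def:meta-trans}, together with the DVC-stability built into standard reductions and answer sets, are designed to secure. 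Threading these invariants through the induction, and verifying at each rewrite that the constraint tuples of the meta-level join are implied by those already present, is the genuinely delicate part of the proof.
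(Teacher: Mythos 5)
Your proposal is correct and takes essentially the same route as the paper's proof: encode the converging sequence for $\n{s}$ together with the pending transformation step, use completeness of the diagram set to rewrite every remaining overlap to its join (after the $\alpha$-renaming/DVC adjustments that Conditions~\eqref{transcond1} and \eqref{transcond2} of Definition~\ref{def:meta-trans} and the DVC-stability of standard reductions and answers make possible --- the paper formalizes this as an explicit list of sequence modifications), and use termination of the induced string rewrite system to conclude that the process ends in a pure $\SR$-sequence from $\n{t}$ to an answer. The one point where you are more explicit than the paper is the backward half of $\sim_{\maycon}$: the paper's proof as written only derives $\n{s}\maycon \implies \n{t}\maycon$ and leaves the symmetric direction (via the reversed transformation and its forking, i.e.\ commuting, diagrams) implicit, whereas you spell it out.
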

\begin{proof}
Since transformations in $\mathsf{TR}$ are subsumed by the transformations in $\mathsf{OTR}$ it is sufficient to consider
$\ell \xrightarrow{\TRANS,n}_\Delta r \in \mathsf{OTR}$.
Assume that $\n{s} \xrightarrow{\TRANS,n} \n{t}$ and $\n{s}\maycon$. Then there exists a sequence
$\n{s}_k' \sim_\alpha \n{s}_k \xleftarrowalpha{\SR} \cdots \xleftarrowalpha{\SR} \n{s} \xrightarrow{\TRANS,n} \n{t}$
where $\n{s}_k' \in \gamma(\ANS)$.  We apply modifications to the sequence and replace overlaps by joins according to the following rules:
\begin{enumerate}

\item\label{modif:1} 
If the sequence contains a transformation step $\n{s}_1 \xrightarrow{\TRANS,n'} \n{s}_2$ where 
$\xrightarrow{\TRANS,n'}_{\Delta'} \in (\mathsf{TR} \setminus \mathsf{OTR}$), then there exists $\xrightarrow{\TRANS,n''}_{\Delta''}\in \mathsf{OTR}$ with
$\n{s}_1{\xrightarrow{\TRANS,n'}}\n{s}_2 \in \gamma(\xrightarrow{\TRANS,n''}_{\Delta''})$.
Replace $\n{s}_1 \xrightarrow{\TRANS,n'} \n{s}_2$ by $\n{s}_1 \xrightarrow{\TRANS,n''} \n{s}_2$.

\item\label{modif:2}
If the sequence contains a step $\n{s}_1 \xleftarrowalpha{\SR,n'} \n{s}_2$,
\ie{}~$\n{s}_1 \xleftarrow{\SR,n'} \n{s}_2' \sim_\alpha \n{s}_2$, and $\n{s}_2'$ does not fulfill the DVC,
then  replace $\n{s}_2'$ by an expression $\n{s}_2''\sim_\alpha \n{s}_2'$ such that $\n{s}_2''$ fulfills the DVC. 
By the definition~of standard reductions, the standard reduction $\n{s}_1'\xleftarrow{\SR,n'} \n{s}_2''$ with $\n{s}_1' \sim_\alpha \n{s}_1$ exists.
Replace $\n{s}_1 \xleftarrow{\SR,n'} \n{s}_2' \sim_\alpha \n{s}_2$ by $\n{s}_1\sim_\alpha \n{s}_1' \xleftarrow{\SR,n'} \n{s}_2'' \sim_\alpha \n{s}_2$.

\item\label{modif:3}
If the sequence contains $\n{s}_1\xleftarrowalpha{\SR} \n{s}_2 \xrightarrowalpha{\SR} \n{s}_3$, 
then the calculus is deterministic and thus $\n{s}_1 \sim_\alpha \n{s}_3$ holds.
Replace the $\n{s}_1\xleftarrowalpha{\SR} \n{s}_2 \xrightarrowalpha{\SR} \n{s}_3$ by  $\n{s}_1 \sim_\alpha \n{s}_3$.

\item\label{modif:4} 
If the sequence has a prefix $\n{s}_1 \xrightarrowalpha{\SR} \n{s}_3$ where $\n{s}_1$ is an answer,
then the calculus is deterministic and $\n{s}_3$ is an answer and we replace the prefix $\n{s}_1 \xrightarrowalpha{\SR} \n{s}_3$ by $\n{s}_3$. 

\item\label{modif:5}
Subsequences $\n{s}_1 \sim_\alpha \n{s}_2 \sim_\alpha \n{s}_3$ are replaced by $\n{s}_1 \sim_\alpha \n{s}_3$.

\item\label{modif:6} If the left-most expression of the sequence is $\n{s}_1\in\gamma(\ANS)$ and does not fulfill the DVC, then replace
$\n{s}_1$ by $\n{s}_1' \sim_\alpha \n{s}_1$ such that $\n{s}_1'$ fulfills the DVC. 
Due to our assumption on answers,  $\n{s}_1'\in\gamma(\ANS)$.

\item\label{modif:7} If the sequence has a prefix $\n{t}_1 \sim_\alpha  \n{s}_1 \xrightarrow{\TRANS,n'} \n{s}_2$, 
where $\n{t}_1$ fulfills the DVC and $\n{t}_1\in\gamma(\ANS)$, 
then first apply Condition~\eqref{transcond1} of Definition~\ref{def:meta-trans},
\ie{}~replace the prefix by  $\n{t}_1\sim_\alpha \n{s}_1' \xrightarrow{\TRANS,n} \n{s}_2' \sim_\alpha \n{s}_2$ 
where $\n{s}_1'\in\gamma(\ANS)$ and $\n{s}_1' \sim_\alpha \n{t}$.
Since the set of answer diagrams is complete, 
there is an answer diagram that allows us to 
replace the answer overlap $\n{s}_1' \xrightarrow{\TRANS,n} \n{s}_2'$  by the corresponding join.

\item\label{modif:8} If the sequence contains $\n{t}_2 \xleftarrow{\SR,n'} \n{t}_1 \sim_\alpha \n{s}_1 \xrightarrow{\TRANS,n''} \n{s}_2$,
then $\n{t}_1$ fulfills the DVC (by the modification in item~\ref{modif:2}) and we can use Condition~\ref{transcond2} of
Definition~\ref{def:meta-trans} and  replace $\n{t}_2 \xleftarrow{\SR,n'} \n{t}_1 \sim_\alpha \n{s}_1 \xrightarrow{\TRANS,n''} \n{s}_2$ by
$\n{t}_2 \sim_\alpha \n{t}_2' \xleftarrow{\SR,n'}  \n{s}_1' \xrightarrow{\TRANS,n''} \n{s}_2' \sim_\alpha \n{s}_2$. 
Since the set of forking diagrams is complete, we can apply 
a diagram in the set and
replace the forking overlap $\n{t}_2' \xleftarrow{\SR,n'}  \n{s}_1' \xleftarrow{\TRANS,n''} \n{s}_2'$ by its join.
\end{enumerate}
The modifications show that we can replace overlaps by joins until the sequence is of the form
$\n{s}_n \xleftarrowalpha{\SR} \cdots \xleftarrowalpha{\SR} \n{t}$. 
Termination of the string rewrite system and the observation that 
$\xrightarrowalpha{\SR}$-reductions which are introduced by joins can always be removed by
the modifications \eqref{modif:3} and \eqref{modif:4}, shows that the replacement together with the modifications terminates.
Since, the left end of the sequence is always an expression in $\gamma(\ANS)$, this shows $\n{t}\maycon$.
\end{proof}

\section{The Simple Example}\label{sec:simple:app}
We provide the input for the LRSX Tool for the calculus $\mathit{Simple}$ and the correctness proof 
of transformation (top). Note that $\bot$ is represented by \verb!bot!, $\top$ by \verb!top!, $\neg$ by \verb!neg!, and $\wedge$ by \verb!cap! (written prefix).

\footnotesize
\begin{verbatim}
-- file: simple.inp
-- Evaluation contexts A and arbitrary contexts C
define A ::= [.] | (cap A S) | (neg A)
define C ::= [.] | (cap C S) | (cap S C) | (neg C)
-- The prefix table and the forking table
declare prefix A A = (A,A) 
declare prefix A C = (A,C)
declare prefix C A = (A,A)
declare prefix C C = (C,C)
declare fork   A C = (A,A,C,(cap [.1] [.2]))
declare fork   C C = (C,C,C,(cap [.1] [.2]))
declare fork   C C = (C,C,C,(cap [.2] [.1]))
declare fork   C A = (A,C,A,(cap [.2] [.1]))
-- standard reduction and answers
{SR,bot}   A[cap bot S] ==> A[bot]
{SR,top}   A[cap top S] ==> A[S]
{SR,neg,1} A[neg top]   ==> A[bot]
{SR,neg,2} A[neg bot]   ==> A[top] 
ANSWER top
-- our example transformation:
{top} C[cap top S] ==> C[S]
-- control commands to compute the diagrams
"forking_diagrams"   <- overlap (top).l all
"commuting_diagrams" <- overlap (top).r all 
-- calling 
--  lrsx join simple.inp
--  lrsx induct atp-path=aprove/ forking_diagrams
--  lrsx induct atp-path=aprove/ commuting_diagrams
-- will generate the diagrams and perform the automated induction
-- (it is assumed that aprove.jar and ceta are in the path specified by atp-path)
\end{verbatim}

\end{document}